\documentclass[12pt, draftclsnofoot, onecolumn]{IEEEtran}
\makeatletter
\def\endthebibliography{%
  \def\@noitemerr{\@latex@warning{Empty `thebibliography' environment}}%
  \endlist
}
\makeatother
\ifCLASSINFOpdf
   \usepackage[pdftex]{graphicx}
\else
 \usepackage[dvips]{graphicx}
\fi

%
\usepackage{cite}
\usepackage[cmex10]{amsmath}
\usepackage{mdwmath}
\usepackage{amssymb}
\usepackage{algorithmic}
\usepackage{array}
\usepackage{mdwmath}
\usepackage{mdwtab}
\usepackage{eqparbox}
\usepackage{bm}
\usepackage[tight,footnotesize]{subfigure}
\usepackage{setspace,color}
\usepackage[ruled,vlined]{algorithm2e}
\usepackage[T1]{fontenc}
\usepackage[utf8]{inputenc}
\usepackage{amsthm}
\usepackage{multirow}
\usepackage{amsmath,amssymb}
\usepackage{steinmetz}
\usepackage{mathtools, cuted}
\usepackage{cuted}
\usepackage{flushend}

\newtheorem{theorem}{Theorem}
\newtheorem*{remark}{Remark}

\DeclareMathOperator{\Tr}{Tr}
\hyphenation{op-tical net-works semi-conduc-tor}

\begin{document}
\renewenvironment{description}[1][12pt]
  {\list{}{\labelwidth=0pt \leftmargin=#1
   \let\makelabel\descriptionlabel}}
  {\endlist}
 \newcommand\norm[1]{\left\lVert#1\right\rVert}
 \setlength{\abovedisplayskip}{0.1cm}
\setlength{\belowdisplayskip}{0.1cm}
\title{Precoder Design For Multi-group Multicasting with a Common Message}

\author{Ahmet Zahid Yalcin~\IEEEmembership{}and
        Melda Yuksel~\IEEEmembership{}
}

\maketitle
\begin{abstract}

This paper considers precoding for multi-group multicasting with a common message. The multiple antenna base station communicates with $K$ clusters, each with $L$ users. There is a common message destined to all users and a private multicast message for each cluster. We study the weighted sum rate (WSR) maximization problem for two different schemes: (i) the base station transmits the superposition of common and multicast messages, (ii) the base station concatenates the multicast message vector with the common message. We also formulate a second problem, weighted minimum mean square error (WMMSE) minimization, and prove that WSR maximization and WMMSE minimization are equivalent at the optimal solution. Inspired by the WMMSE problem, we suggest a suboptimal algorithm, based on alternating optimization. We apply this algorithm to the two transmission schemes, and understand that there is a fundamental difference between the two. We compare the results with maximal ratio transmission (MRT), and zero-forcing (ZF) precoding, and investigate the effects of the number of base station antennas, the number of groups and the number of users in a group. Finally, we study imperfect successive interference cancellation (SIC) at the receivers and show that the first transmission scheme is more robust.
\end{abstract}


\begin{IEEEkeywords}
Common and private data transmission, multicast transmission, multiple input multiple output, physical layer precoding, superposition coding. 
\end{IEEEkeywords}

\IEEEpeerreviewmaketitle
\section{Introduction}

With the introduction of the fifth generation (5G) mobile communication systems, mobile communications will diffuse into all areas of daily life. While the fourth generation (4G) is mainly about Internet access, video calls, and cloud computing, 5G will be about smart-health, smart-cities, 4K video, factory automation, self-driving cars, real-time cloud services and more. All these new and different applications have different requirements, such as high data rates, ultra-low latency and ultra-high reliability, or ability to handle massive radio access. 

The multiple-input multiple-output (MIMO) technology is one of the main enablers for high data rates required for enhanced mobile broadband. The capacity of MIMO channels are found in \cite{Foschini1998_OnLimitsOf}, \cite{telatar_99}. These papers reveal that the degrees of freedom in a MIMO channel is limited by the minimum of the number of transmit and receive antennas. Then the degrees of freedom of a communication link between a multiple antenna base station and a single antenna user simply becomes equal to 1. This degrees of freedom limitation can be alleviated by multi-user MIMO (MU-MIMO) transmission \cite{Vaze12}. 

In MU-MIMO systems, multiple users simultaneously receive their own unicast data streams \cite{Goldsmith2006_OnTheOptimality}. The capacity of MU-MIMO channels is achieved by dirty paper coding (DPC) \cite{Weingarten2006_TheCapacityRegion}. However, DPC is a highly complex, non-linear precoding scheme. Therefore, in the literature, simple linear precoding strategies have been investigated for MU-MIMO transmission \cite{Sharif2007_AComparisionTimeSharingDPC}, \cite{Christensen2008_WeightedSumRate}. Although sub-optimal in general, superposition coding (SPC) is also used as a simpler alternative to DPC. It offers good interference management \cite{Zhang2011_AUnifiedTreatmentofSPC}, \cite{Vanka2012_SPCStrategies},  \cite{Joudeh2016_SumRateMax} and harvests some of the benefits of DPC in MU-MIMO systems with applications to non-orthogonal multiple access (NOMA) \cite{ding2014performance, seo2018high}.

While simultaneous multiple unicast data transmission is important, applications such as mobile application updates, mass advertisements and public group communications require multicasting \cite{multicast_1}, \cite{multicast_2}, \cite{multicast_3}. In multicast transmission, the same data is sent to a group of users. Precoding for multicasting is studied in \cite{Jindal2006_CapacityLimitsofMultipleAnt}, \cite{Abdelkader2010_MultipleAntennaMulticasting} and \cite{Chiu2009_TransmitPrecoding}. Precoder design for max-min fairness for multicasting is studied in \cite{Sidiropoulos2006_TransmitBeamforming}. 

Unicast and multicast traffic can also coexist in cellular networks. Scheduling unicast and multicast traffic is considered in \cite{Silva2006_AdaptiveBeamforming} and \cite{Baek2009_AdaptiveTransmission}, while their simultaneous transmission via superposition coding for an enhanced sum rate is studied in \cite{Baek2009_AdaptiveTransmission}. While \cite{Baek2009_AdaptiveTransmission} is for a very limited system with two users, \cite{Yalcin18downlink} studies precoder design problem for a system in which all users have to decode the common message and a subset of these users have to receive their own unicast messages. 

In multi-group multicasting there are multiple groups, where each user in the same group seeks for the same message and different groups receive different messages \cite{Silva2009_LinearTransmitBeamforming}. Precoding for a guaranteed quality of service with minimum transmission power is studied in \cite{Gao2005_GroupOrientedBeamforming} and  \cite{Karidipis2008_QoSMaxMinFairTransmit}. Max-min fair transmit precoding for multi-group multicasting under a total power constraint is also studied in \cite{Karidipis2008_QoSMaxMinFairTransmit}. Weighted fair multi-group multicasting precoding under per antenna power constraints is investigated in \cite{Christopoulos2014_WeightedFair}. A rate splitting approach is suggested for max-min fair transmit precoder design in \cite{Joudeh2017_RateSplittingforMaxMin}. The max-min fair transmit precoding problem for multi-group multicasting in massive MIMO systems is studied in \cite{Zhou2015_JointMulticast} and \cite{Sadeghi2018_MMFMassiveMIMO}. Maximum sum rate for multigroup multicast precoding under per-antenna power constraints is studied in \cite{christopoulos2015multicast}. The authors in \cite{kaliszan2012multigroup} also study sum rate maximization, yet under a total power constraint, and allow for coding over multiple blocks. 

Apart from the above discussion on the objective function (sum rate maximization or max-min fairness) or the system topology (unicast/ multicast transmission etc.), another issue is whether successive interference cancellation (SIC) is perfect or not. In general imperfect channel state information (CSI) and/or hardware impairments result in imperfect SIC. While \cite{Vanka2012_SPCStrategies, ding2014performance, seo2018high, Yalcin18downlink, Joudeh2017_RateSplittingforMaxMin, ekrem2012outer} consider perfect CSI and perfect SIC, \cite{Joudeh2016_SumRateMax} considers imperfect CSI and perfect SIC, and \cite{Kim2015_DesgnOfUserClustering, senel2017optimal} consider imperfect SIC.



All the above multi-group multicasting papers, as well as the others we have encountered in the literature, consider non-overlapping multicasting groups. As an initial step in finding sum rate optimal precoders for the most general problem with overlapping multicast groups with unicast messages, in this work we study multi-group multicasting with common messages. In this system there are non-overlapping multicast groups, yet all groups are also interested in a common message. In this paper we follow the outline listed below:
\begin{enumerate}
\item 
To be able to send the common message along with the private multicast messages, we define two different transmission schemes: (i) the base station transmits the superposition of the common and multicast messages, (ii) the base station regards the common message as another message and concatenates the multicast message vector with the common message.
\item For these transmission schemes, we study the weighted sum rate (WSR) maximization problem. To be able to propose the iterative algorithm described in the next step, we also define the weighted minimum mean square error (WMMSE) problem. We then write the gradient expressions and the Karush-Kuhn-Tucker (KKT) conditions of the Lagrangian equations for both problems. Then, we prove that these two problems are equivalent at the optimum point. Although we cannot find a closed form expression, we state the equations the optimal precoders, receivers and the Lagrange multipliers satisfy. 
\item The WSR maximization problem is non-convex and the optimal solution is hard to find. Inspired by the equivalency between WMMSE minimization and WSR maximization, we propose a low-complexity iterative algorithm for finding a locally optimum solution. The algorithm is based on alternating optimization, which iterates between precoders, mean square error (MSE) weights and receiver structures written for the WMMSE problem. 
We also discuss and display convergence for this algorithm. 
\item We apply this algorithm to both of the transmission schemes, and compare the results with maximal ratio transmission (MRT), and zero-forcing (ZF) precoding. 
\item We also investigate the effect of imperfect SIC on achievable WSR. 
\end{enumerate}
As a result, we learn that there is a fundamental difference between  transmitting the common message via the sum of the private multicast message precoders and via a separate precoder. We find that the two schemes favor common and private multicast messages differently. Also, the former transmission scheme is always better than the latter and is more robust against imperfect SIC. To the best of our knowledge, in the literature, there is no other paper which compares these two schemes to reveal their fundamental differences.

We present the system model, formulate the WSR and WMMSE problems and prove their equivalence in Section \ref{sec:systemmodel}. We propose the iterative algorithm in Section \ref{sec:precoder}. We present the simulation results in Section \ref{sec:simresult}, and conclude the paper in Section \ref{sec:conclusion}.


\section{System Model}\label{sec:systemmodel}

We consider a single cell downlink communication system. The base station is equipped with $M$ transmit antennas. The base station communicates with $K$ clusters and there are $L$ single antenna users in each cluster. Each user belongs to only one cluster. The base station has a common data $s_c$ across all users, and private multicast data ${s}_{u_k}$ ($k=1,\ldots,K$) destined to each of the $K$ clusters. 

In this work, our aim is to understand the conditions under which superposition coding is beneficial. For this purpose, we study two different signal models. In the first model, the common message is superposed onto the private multicast message vector, and in the second model, the common message is appended to the private multicast message vector. 

\begin{description}
\item[Signal Model 1:] 

The base station employs a 2-layer superposition coding scheme, in which the base and enhancement layers respectively carry common and private multicast data. The input data vector is denoted as $\mathbf{s}^{(1)} = {[{s}_1,\ldots,{s}_{K}]}^T$ $\in \mathbb{C}^{K \times 1}$, and each input data stream ${s}_k$, $k= 1,\ldots ,K$ is the sum of common and private multicast data, respectively denoted as ${s}_{c}$ and ${s}_{u_k}$. Thus, the input data vector can be written as $\mathbf{s}^{(1)}=\mathbf{s}_c + \mathbf{s}_u$, where $\mathbf{s}_c = {[{s}_c,\ldots,{s}_c]}^T$ $\in \mathbb{C}^{K \times 1}$ and $\mathbf{s}_u = {[{s}_{u_1},\ldots,{s}_{u_{K}}]}^T$ $\in \mathbb{C}^{K \times 1}$. We assume $s_c$ and all $s_{u_k}$ are independent and
$\mathbb{E}\{{s}_c{s}_c^{\ast}\} = \alpha$ and  $\mathbb{E}\{{s}_{u_k}{s}_{u_k}^{\ast}\} = \bar{\alpha}$. Here, $\bar{\alpha} = 1-\alpha$ and $\alpha$ is the ratio of power allocated to common data. The input data vector $\mathbf{s}^{(1)}$ is linearly processed by a precoder matrix $\mathbf{P}^{(1)} = [\mathbf{p}_1^{(1)},\ldots,\mathbf{p}_{K}^{(1)}]$  $\in \mathbb{C}^{M \times K}$. Each precoding vector $\mathbf{p}_k^{(1)}$ is of size $M \times 1$. 

\item[Signal Model 2:] 

The input data vector is defined as $\mathbf{s}^{(2)} = {[s_c,{s}_{u_1},\ldots,{s}_{u_K}]}^T$ $\in \mathbb{C}^{(K+1) \times 1}$,  where $s_c$ and ${s}_{u_k},k=1,\ldots,K$, are the same as in the above signal model 1. We assume $s_c$ and all $s_{u_k}$ are independent and $\mathbb{E}\{\mathbf{s}^{(2)}{\mathbf{s}^{(2)}}^H\} = \mathbf{I}$. The input data vector $\mathbf{s}^{(2)}$ is linearly processed by a precoder matrix $\mathbf{P}^{(2)} = [\mathbf{p}_c, \mathbf{p}_1^{(2)},\ldots,\mathbf{p}_{K}^{(2)}]$  $\in \mathbb{C}^{M \times (K+1)}$, where both the precoding vector $\mathbf{p}_k^{(2)}$ for each multicast data and $\mathbf{p}_c$ for common data are of size $M \times 1$. 

\end{description}


Then, for signal model $m=1,2$, the overall transmit data vector $\mathbf{x}$ $\in \mathbb{C}^{M \times 1}$ at the base station can be written as
\begin{align}
\mathbf{x}^{(m)} &= \mathbf{P}^{(m)}\mathbf{s}^{(m)} = \mathbf{p}_A^{(m)} {s}_c + \sum_{k=1}^{K} \mathbf{p}_k^{(m)} {s}_{u_k}.\label{precodedSignal}
\end{align}
Here, $\mathbf{p}_A^{(1)} = \sum_{k=1}^{K}\mathbf{p}_k^{(1)}$ and $\mathbf{p}_A^{(2)}  = \mathbf{p}_c$ for signal models 1 and 2, respectively.

As user-level precoding is assumed, there is an average total power constraint,
\begin{align}
\mathbb{E}\{{\mathbf{x}^{(m)}}^{H}{\mathbf{x}^{(m)}}\}&= B^{(m)} \Tr (\mathbf{p}_A \mathbf{p}_A^H) +  C^{(m)} \sum_{k=1}^{K} \Tr(\mathbf{p}_k^{(m)}{\mathbf{p}_k^{(m)}}^H )\leq E_{tx}. \label{pow_const}
\end{align} In (\ref{pow_const}), $(B^{(1)} ,C^{(1)} ) =(\alpha, \bar{\alpha} )$ and $(B^{(2)} ,C^{(2)} ) = (1,1)$ for signal models $1$ and $2$, respectively. Note that, signal model 1 seems to be more restrictive, as $\mathbf{p}_A^{(1)}$ has to be set to the sum of the private multicast data precoders. However, it introduces a new degrees of freedom due to the parameter $\alpha$. The iterative precoder design algorithm we propose in Section~\ref{sec:precoder} designs precoder directions and power levels jointly. The $\alpha$ parameter will serve as an external handle that allows for adjustments in precoder power levels. Because of this difference, in Section~\ref{sec:simresult}, we will observe that the two signal models are fundamentally different.

The received signal at the $l$-th user of the $k$-th cluster can be expressed as
\begin{align}
 {y}_{l,k}^{(m)} &=\mathbf{h}_{l,k} \mathbf{p}_A^{(m)}  {s}_c + \mathbf{h}_{l,k} \sum_{i=1}^{K} \mathbf{p}_i^{(m)}  {s}_{u_i} + {n}_{l,k}.\label{rec_signal}
\end{align} In (\ref{rec_signal}), ${\mathbf{h}_{l,k}}$ $\in \mathbb{C}^{1 \times M}$ is the channel gain vector of the $l$-th user of the $k$-th cluster, $l = 1,...,L$, $k = 1,...,K$. The entries in $\mathbf{h}_{l,k}$ denote complex valued channel gains. 
The noise component ${n}_{l,k}$ is independent and circularly symmetric complex Gaussian random variable with zero mean and unit variance. The base station is assumed to know all $\mathbf{h}_{l,k}$, while the $l$-th user in the $k$-th cluster has to be informed about the composite channel gains $\mathbf{h}_{l,k}\mathbf{p}_A^{(m)}$ and $\mathbf{h}_{l,k}\mathbf{p}_k^{(m)}$. This information can be acquired by utilizing standard training techniques. For example, in time division duplex mode, in the uplink, the users can send training data to the base station, and the base station learns all $\mathbf{h}_{l,k}$. Then, the base station can send training data in the downlink phase \emph{twice}. This way, the users can learn the composite channel gains for the common and the private multicast messages respectively in the first and second downlink transmissions. We further assume that these steps are all error free.



\subsection{Achievable Data Rates}
In this system, all users decode the common message. In addition to this, each user subtracts this common message from its received signal to decode its private multicast message using SIC. Then, the achievable rate for common and private multicast messages for the $l$-th user of the $k$-th cluster are respectively defined as $R_{c,lk}$, and $R_{u,lk}$, $l=1,\ldots,L, k=1,\ldots,K$ and are given as
\begin{align}
R_{c,lk}^{(m)} &= \log \big( 1 + B^{(m)}  {\mathbf{p}_A^{(m)} }^H \mathbf{h}_{l,k}^H {r}_{c,lk}^{(m)^{-1}} \mathbf{h}_{l,k} \mathbf{p}_A^{(m)} \big),\label{rate_cu}\\
R_{u,lk}^{(m)} &= \log \big( {1} + C^{(m)}  {\mathbf{p}_k^{{(m)}}}^H \mathbf{h}_{l,k}^H {r}_{u,lk}^{(m)^{-1}} \mathbf{h}_{l,k} \mathbf{p}_k^{(m)}\big). \label{rate_u}
\end{align} Here ${r}_{c,lk}^{(m)}$ and ${r}_{u,lk}^{(m)}$  are the effective noise variances for common and multicast data at the $l$-th user of the $k$-th cluster for the $m$-th signal model. They can be calculated as
\begin{align}
{r}_{c,lk}^{(m)} &= C^{(m)}  \mathbf{h}_{l,k} \left(\sum_{i=1}^{K}\mathbf{p}_i^{(m)}{\mathbf{p}_i^{(m)}}^H\right) \mathbf{h}_{l,k}^H + 1,\\
{r}_{u,lk}^{(m)} &= \delta^2 B^{(m)}   \mathbf{h}_{l,k} \mathbf{p}_A^{(m)}{\mathbf{p}_A^{(m)} }^H \mathbf{h}_{l,k}^H +  C^{(m)}  \mathbf{h}_{l,k} \left(\sum_{i=1,i\ne k}^{K}\mathbf{p}_i^{(m)}{\mathbf{p}_i^{(m)}}^H\right) \mathbf{h}_{l,k}^H + 1,\label{r_ulk_imperfectSIC}
\end{align}where $\delta$ denotes the amount of residual self interference. If SIC is perfect, $\delta=0$. Note that, overall the achievable rate for common data is determined by the minimum of all $R_{c,lk}^{(m)} $ and the achievable rate for multicast data for group $k$, $s_{u_k}$, is determined by the minimum of all $R_{u,lk}^{(m)}$ in group $k$. Thus, we also define
\begin{align}
R_{c}^{(m)} &= \min_{k=\{1,...,K\}} \min_{l=\{1,...,L\}} R_{c,lk}^{(m)}, \label{eqn:minRc}\\
R_{u,k}^{(m)} &= \min_{l=\{1,...,L\}} R_{u,lk}^{(m)}. \label{eqn:minRu}
\end{align} 
\subsection{The Maximum WSR Problem}
In this paper, our aim is to find the optimal precoders $\mathbf{P}^{(1)}$ and $\mathbf{P}^{(2)}$ for signal models 1 and 2 respectively such that the WSR is maximized subject to a total power constraint. The WSR of the system can be computed as 
\begin{align}
T_1 = \sum_{k=1}^{K}   a_{k} R_{u,k}^{(m)}   + b R_{c}^{(m)}, \label{T_1}
\end{align}where $a_{k}$ and $b$ respectively denote the rate weights that correspond to multicast data at cluster $k$ and common data. The optimization problems $\mathcal{P}_1$ and $\mathcal{P}_2$ are defined as
\begin{align}
&\mathcal{P}_{1}: [{\mathbf{p}_1^{(1)SR}},\ldots,{\mathbf{p}_K^{(1)SR}}] = \arg \max_{\mathbf{p}_k^{(1)}} T_1, \quad \text{s.t. }  (\ref{pow_const}),  \\
&\mathcal{P}_{2}: [\mathbf{p}_c^{SR},\mathbf{p}_1^{(2)SR},\ldots,\mathbf{p}_{K}^{(2)SR}] = \arg \max_{\mathbf{p}_c,\mathbf{p}_k^{(2)}} T_1, \quad \text{s.t. }  (\ref{pow_const}). 
\end{align}We need to convert these two problems to smooth constrained optimization problems because they are non-convex and difficult to solve. Thus, we introduce and constrain two new auxiliary variables $t_k, k = {1,...,K}$ and $z$ as
\begin{align}
t_k  &\leq  a_k R_{u,lk}^{(m)},\; \forall l, \forall k, \label{SR_tk} \\
 z  &\leq  b R_{c,lk}^{(m)},\; \forall l, \forall k. \label{SR_z}
\end{align}Then, we can reformulate $\mathcal{P}_{1}$ and $\mathcal{P}_{2}$ as:
\begin{align}
\mathcal{P}'_{1}:[{\mathbf{p}_1^{(1)SR}},\ldots,{\mathbf{p}_K^{(1)SR}}] = \arg \max_{\mathbf{p}_k^{(1)},t_k,z} T_2, \quad \text{s.t. } (\ref{pow_const}), (\ref{SR_tk}), (\ref{SR_z}), \label{max_SR_2}\\
\mathcal{P}'_{2}:[\mathbf{p}_c^{SR},\mathbf{p}_1^{(2)SR},\ldots,\mathbf{p}_{K}^{(2)SR}] = \arg \max_{\mathbf{p}_c,\mathbf{p}_k^{(2)},t_k,z}  T_2, \quad
\text{s.t. } (\ref{pow_const}), (\ref{SR_tk}), (\ref{SR_z}) \label{max_SR_22},
\end{align}where $T_2 = \sum_{k=1}^{K} t_k +  z$.
\subsection{Error Variance Definitions and the Minimum WMMSE Problem}
The one-to-one correspondence between mutual information and minimum mean square error (MMSE) for Gaussian channels is established in \cite{guo2005mutual}, and for MIMO broadcast channels in \cite{Christensen2008_WeightedSumRate}. To establish such a correspondence in our system setup, we first write the MMSE expressions and then compare them with the achievable rates in (\ref{rate_cu}) and (\ref{rate_u}).

The $l$-th user of the $k$-th cluster first processes its received signal ${y}_{l,k}^{(m)}$, with the common data receiver $W_{l,k}^{(m)}$ to form an estimate of $s_c$, denoted as $\hat{{s}}_{c,lk}= {W}_{l,k}^{(m)} {y}_{l,k}^{(m)}$. In the second stage, the $l$-th user of the $k$-th cluster forms an estimate for the multicast message $u_k$ as $\hat{{s}}_{u_{lk}} = {V}_{l,k}^{(m)}\cdot\left(y_{l,k}^{(m)}-\mathbf{h}_{l,k}\mathbf{p}_{A}^{(m)}s_c+\delta \mathbf{h}_{l,k}\mathbf{p}_{A}^{(m)}s_c\right)$. In all the following analysis, we will assume perfect SIC, $\delta=0$, and in Section~\ref{sec:simresult} we will investigate the effect of nonzero $\delta$.

MSE expressions of common and multicast data for the $l$-th user of the $k$-th cluster are respectively defined as
${\varepsilon}_{c,lk}^{(m)} = \mathbb{E}\left\{\norm{\hat{{s}}_{c,lk} - {s}_c}^2\right\}$, and
${\varepsilon}_{u,lk}^{(m)} = \mathbb{E}\left\{\norm{\hat{{s}}_{u_{lk}} - {s}_{u_k}}^2\right\}$, and for perfect SIC, their closed form expressions can be written as
\begin{align}
{\varepsilon}_{c,lk}^{(m)} &= B^{(m)} \mathbf{h}_{l,k} \mathbf{p}_{A}^{(m)} {\mathbf{p}_{A}^{(m)}}^H \mathbf{h}_{l,k}^H W_{l,k}^{\ast(m)}W_{l,k}^{(m)} + C^{(m)} \sum_{i=1}^{K}\mathbf{h}_{l,k} \mathbf{p}_i^{(m)}{\mathbf{p}_i^{(m)}}^H\mathbf{h}_{l,k}^H W_{l,k}^{\ast(m)}W_{l,k}^{(m)}   \nonumber \\
&\quad \: + W_{l,k}^{\ast(m)}W_{l,k}^{(m)} -  \mathbf{h}_{l,k}\mathbf{p}_{A}^{(m)}W_{l,k}^{(m)} B^{(m)} -B^{(m)}{\mathbf{p}_{A}^{(m)}}^H\mathbf{h}_{l,k}^H W_{l,k}^{\ast(m)} + B^{(m)}, \label{CMSE_u}\\
{\varepsilon}_{u,lk}^{(m)} &= C^{(m)} \sum_{i=1}^{K}\mathbf{h}_{l,k} \mathbf{p}_i^{(m)} {\mathbf{p}_i^{(m)}}^H \mathbf{h}_{l,k}^H V_{l,k}^{\ast(m)} V_{l,k}^{(m)} + V_{l,k}^{\ast(m)} V_{l,k}^{(m)}  -   \mathbf{h}_{l,k}\mathbf{p}_{k}^{(m)}V_{l,k}^{(m)}C^{(m)}\nonumber \\
&\quad \: - C^{(m)} {\mathbf{p}_{k}^{(m)}}^H\mathbf{h}_{l,k}^HV_{l,k}^{\ast(m)}  + C^{(m)}. \label{UMSE_k}
\end{align}The optimal MMSE receivers for common and multicast data are defined as ${W}_{l,k}^{(m)MMSE} = \arg\min_{{W}_{l,k}} {\varepsilon}_{c,lk}^{(m)}$ and ${V}_{l,k}^{(m)MMSE} = \arg\min_{{V}_{l,k}} {\varepsilon}_{u,lk}^{(m)}$.
The closed form expressions for these MMSE receivers are then calculated as
\begin{align}
{W}_{l,k}^{(m)MMSE} &= B^{(m)} \mathbf{p}_A^{(m)^H} \mathbf{h}_{l,k}^H \left( B^{(m)} \mathbf{h}_{l,k} \mathbf{p}_A^{(m)} \mathbf{p}_A^{(m)^H} \mathbf{h}_{l,k}^H + r_{c,lk}^{(m)} \right)^{-1},\label{W_rec}\\
V_{l,k}^{(m)MMSE} &= C^{(m)} \mathbf{p}_k^{(m)^H} \mathbf{h}_{l,k}^H \left( C^{(m)} \mathbf{h}_{l,k} \mathbf{p}_k^{(m)}\mathbf{p}_k^{(m)^H} \mathbf{h}_{l,k}^H + r_{u,lk}^{(m)} \right)^{-1}. \label{V_rec_sic}
\end{align}Given that these MMSE receivers in (\ref{W_rec}) and (\ref{V_rec_sic}) are employed, the resulting error variance expressions in (\ref{CMSE_u}) and (\ref{UMSE_k}) become
\begin{align}
{\varepsilon}_{c,lk}^{(m)MMSE}  &=\left( {\frac{1}{B^{(m)}}} +  \mathbf{p}_A^{(m)^H}\mathbf{h}_{l,k}^H {r}_{c,lk}^{(m)^{-1}} \mathbf{h}_{l,k} \mathbf{p}_A^{(m)} \right)^{-1}, \label{error_cov_common}\\
{\varepsilon}_{u,lk}^{(m)MMSE} &=\left({\frac{1}{C^{(m)}} } + \mathbf{p}_k^{(m)^H}\mathbf{h}_{l,k}^H {r}_{u,lk}^{(m)^{-1}} \mathbf{h}_{l,k} \mathbf{p}_k^{(m)} \right)^{-1}.\label{error_cov_uni}
\end{align}Comparing (\ref{rate_cu}) and (\ref{rate_u}) with (\ref{error_cov_common}) and (\ref{error_cov_uni}) we can write
\begin{align}
R_{c,lk}^{(m)} &= -\log\left(\frac{\varepsilon_{c,lk}^{(m)MMSE}}{ B^{(m)}}\right), \label{rate_cu_errrorCov}\\
R_{u,lk}^{(m)} &= -\log \left( \frac {\varepsilon_{u,lk}^{(m)MMSE}}{C^{(m)}}\right). \label{rate_u_errorCov}
\end{align}
We also define
\begin{align}
{\varepsilon}_{c}^{(m)MMSE}&= \max_{k=\{1,...,K\}} \max_{l=\{1,...,L\}} {\varepsilon}_{c,lk}^{(m)MMSE},\\
{\varepsilon}_{u,k}^{(m)MMSE} &= \max_{l=\{1,...,L\}} {\varepsilon}_{u,lk}^{(m)MMSE}.
\end{align}The WMMSE minimization objective function is then
\begin{align}
Q_1 = \sum_{k=1}^{K}{v}_{k}^{(m)} {\varepsilon}_{u,k}^{(m)MMSE} + {w}^{(m)}{\varepsilon}_{c}^{(m)MMSE}, \label{eqnQ1}
\end{align}where ${w}^{(m)}$ and ${v}_k^{(m)}$ denote the MMSE weights for common data for all users and multicast data at cluster $k$ respectively. The optimization problems $\mathcal{P}_{3}$ and $\mathcal{P}_{4}$ are defined as
\vspace{-0.2cm}
\begin{eqnarray}
&& \mathcal{P}_{3} : [\mathbf{p}_1^{(1)MS},\ldots,\mathbf{p}_{K}^{(1)MS}] =\arg\min_{\mathbf{p}_{k}^{(1)}}  Q_1, \quad \text{s.t. } (\ref{pow_const}), \label{eqnP3}
\end{eqnarray}
\begin{eqnarray}
&& \mathcal{P}_{4} : [\mathbf{p}_c^{MS},\mathbf{p}_1^{(2)MS},\ldots,\mathbf{p}_{K}^{(2)MS}] = \arg\min_{\mathbf{p}_{c},\mathbf{p}_{k}^{(2)}}  Q_1, \quad \text{s.t. } (\ref{pow_const}) \label{eqnP4}.
\end{eqnarray} To solve these non-convex problems, we reformulate both $\mathcal{P}_{3}$ and $\mathcal{P}_{4}$ as smooth constrained optimization problems by introducing auxiliary variables. Without loss of generality, we choose the auxiliary variables as a function of the independent variables $t_k$ and $z$ defined in (\ref{SR_tk}) and (\ref{SR_z}) such that 
\begin{align}
{\varepsilon}_{u,lk}^{(m)MMSE} \leq &e^{-t_k/a_k}, \; \forall l, \forall k, \label{MSE_tk} \\
{\varepsilon}_{c,lk}^{(m)MMSE} \leq &e^{-z/b}, \; \forall l, \forall k. \label{MSE_z}
\end{align} As stated above, this choice does not result in a loss of generality, but is necessary to establish the equivalence between WSR and WMMSE problems. Then, we reformulate $\mathcal{P}_{3}$ and $\mathcal{P}_{4}$ as
\begin{align}
\mathcal{P}'_{3} &: [\mathbf{p}_1^{(1)MS},\ldots,\mathbf{p}_{K}^{(1)MS}] = \arg \min_{\mathbf{p}_{k}^{(1)},t_k,z} Q_2, \quad \text{s.t. }  (\ref{pow_const}), (\ref{MSE_tk}), (\ref{MSE_z}), \label{min_MSE_2}\\
\mathcal{P}'_{4} &: [\mathbf{p}_c^{MS},\mathbf{p}_1^{(2)MS},\ldots,\mathbf{p}_{K}^{(2)MS}]= \arg \min_{\mathbf{p}_{c},\mathbf{p}_{k}^{(2)},t_k,z} Q_2, \quad
\text{s.t. }  (\ref{pow_const}), (\ref{MSE_tk}), (\ref{MSE_z}), \label{min_MSE_22}
\end{align}where
\begin{align}
Q_2 = \sum_{k=1}^{K}{v}_{k}^{(m)} e^{-t_k/ a_k} + w^{(m)} e^{-z/ b}.
\end{align}
In the following, we first prove that the precoders designed for maximum WSR and minimum WMMSE are equivalent at the optimal point. Then we propose an algorithm for precoder design.
\subsection{Gradient Expressions and KKT Conditions for Maximum WSR}
In this section, we study the gradients for the WSR maximization problem. To investigate the stationary points of the problems $\mathcal{P}'_{1}$ and $\mathcal{P}'_{2}$, we formulate the Lagrangian expression as
\begin{align}
f\left(\mathbf{P}^{(m)},t_k,z\right) &= - T_2 + \sum_{k=1}^{K}\sum_{l=1}^{L}\mu_{l,k}^{(m)}(t_k - a_k R_{u,lk}^{(m)}) + \sum_{k=1}^{K}\sum_{l=1}^{L}\eta_{l,k}^{(m)} (z - b R_{c,lk}^{(m)})\nonumber \\
&\;\quad + \lambda^{(m)} \left( B^{(m)}  \norm{\mathbf{p}_A^{(m)}}^2  + C^{(m)}  \sum_{k=1}^{K}\norm{\mathbf{p}_k^{(m)}}^2 - E_{tx}\right) .\label{lagrangian_SR}
\end{align}Here $\lambda^{(m)}$, $\mu_{l,k}^{(m)}$ and $\eta_{l,k}^{(m)}$ are the Lagrange multipliers. We calculate $\nabla_{\mathbf{p}_k^{(m)}} f\left(\mathbf{P}^{(m)},t_k,z\right)$ in Appendix \ref{derive_gradient_f}\footnote{The gradient of a function $f(\mathbf{x})$ with respect to its complex variable $\mathbf{x}$ is denoted as $\nabla_{\mathbf{x}} f(\mathbf{x})$ and its $m$th element is defined as $[\nabla_{\mathbf{x}}f(\mathbf{x})]_{m} = \nabla_{[\mathbf{x}]_{m}}f(\mathbf{x}) = \frac{\partial f(\mathbf{x})}{\partial[\mathbf{x}^\ast]_{m}}$. For detailed derivation rules, we refer the reader to \cite{matrix_cookbook}.} only for signal model $1$ ($m=1$) due to the space limitations. We restate the result.
\vspace{-0.7cm}
\begin{align}
\nabla_{\mathbf{p}_k^{(1)}} f\left(\mathbf{P}^{(1)},t_k,z\right) &= - \sum_{l=1}^{L} \mu_{l,k}^{(1)} a_k \mathbf{h}_{l,k}^H {r}_{u,lk}^{(1)^{-1}} \mathbf{h}_{l,k} \mathbf{p}_k^{(1)} {\varepsilon}_{u,lk}^{(1)MMSE} + \lambda^{(1)}\left(B^{(1)} \mathbf{p}_A^{(1)} + C^{(1)} \mathbf{p}_k^{(1)}\right) \nonumber \\
&\;\quad + \sum_{i=1,i\neq k}^{K} \sum_{l=1}^{L} C^{(1)} \mu_{l,i}^{(1)} a_i \mathbf{h}_{l,i}^H {r}_{u,li}^{(1)^{-1}} \mathbf{h}_{l,i} \mathbf{p}_i^{(1)}{\varepsilon}_{u,li}^{(1)MMSE}\mathbf{p}_i^{(1)^H} \mathbf{h}_{l,i}^H {r}_{u,li}^{(1)^{-1}} \mathbf{h}_{l,i} \mathbf{p}_k^{(1)} \nonumber \\
 &\;\quad + \sum_{i=1}^{K}\sum_{l=1}^{L} C^{(1)}\eta_{l,i}^{(1)} b \mathbf{h}_{l,i}^H {r}_{c,li}^{(1)^{-1}} \mathbf{h}_{l,i} \mathbf{p}_A^{(1)} {\varepsilon}_{c,li}^{(1)MMSE} \mathbf{p}_A^{(1)^H} \mathbf{h}_{l,i}^H {r}_{c,li}^{(1)^{-1}} \mathbf{h}_{l,i}\mathbf{p}_k^{(1)} \nonumber \\
&\;\quad - \sum_{i=1}^{K} \sum_{l=1}^{L}\eta_{l,i}^{(1)} b \mathbf{h}_{l,i}^H {r}_{c,li}^{(1)^{-1}}\mathbf{h}_{l,i}\mathbf{p}_A^{(1)} {\varepsilon}_{c,li}^{(1)MMSE}.  \label{grad_f}
\end{align}
\subsection{Gradient Expressions and KKT Conditions for Minimum WMMSE}
To investigate the stationary points of the minimum WMMSE problems $\mathcal{P}'_3$ and $\mathcal{P}'_4$, we formulate the Lagrangian expression as 
\begin{align}
g(\mathbf{P}^{(m)},t_k,z) &= Q_2 + \sum_{k=1}^{K}\sum_{l=1}^{L} \bar{\mu}_{l,k}^{(m)}v_k^{(m)} (\varepsilon_{u,lk}^{(m)MMSE}- e^{-t_k/a_k}) \nonumber \\
&\;\quad+ \sum_{k=1}^{K}\sum_{l=1}^{L}\bar{\eta}_{l,k}^{(m)} w^{(m)} (\varepsilon_{c,lk}^{(m)MMSE} - e^{-z/b}) \nonumber \\
&\;\quad + \bar{\lambda}^{(m)} \left(B^{(m)} \norm{\mathbf{p}_A^{(m)}}^2  + C^{(m)} \sum_{k=1}^{K} \norm{\mathbf{p}_k^{(m)}}^2 - E_{tx}\right).\label{lagrangian_MSE}
\end{align}Here $\bar{\lambda}^{(m)}$, $\bar{\mu}_{l,k}^{(m)}$ and $\bar{\eta}_{l,k}^{(m)}$ are the Lagrange multipliers. The gradient $\nabla_{\mathbf{p}_k^{(1)} }g(\mathbf{P}^{(1)} ,t_k,z)$ is computed in a similar manner as $\nabla_{\mathbf{p}_k^{(1)} }f(\mathbf{P}^{(1)},t_k,z)$ and is written as in (\ref{grad_g}). 
\begin{eqnarray}
\lefteqn{ \nabla_{\mathbf{p}_k^{(1)}} g\left(\mathbf{P}_k^{(1)},t_k,z\right)}\nonumber\\
&=& - \sum_{l=1}^{L} \bar{\mu}_{l,k}^{(1)}  \mathbf{h}_{l,k}^H {r}_{u,lk}^{(1)^{-1}} \mathbf{h}_{l,k} \mathbf{p}_k^{(1)} {\varepsilon}_{u,lk}^{(1)MMSE} v_k^{(m)} {\varepsilon}_{u,lk}^{(1)MMSE} + \bar{\lambda}^{(1)}\left(B^{(1)}\mathbf{p}_A^{(1)} + C^{(1)}\mathbf{p}_k^{(1)}\right) \nonumber \\
&& - \sum_{i=1}^{K} \sum_{l=1}^{L} \bar{\eta}_{l,i}^{(1)} \mathbf{h}_{l,i}^H {r}_{c,li}^{(1)^{-1}} \mathbf{h}_{l,i} \mathbf{p}_A^{(1)} {\varepsilon}_{c,li}^{(1)MMSE} w^{(m)} {\varepsilon}_{c,li}^{(1)MMSE} \nonumber \\
 &&+ \sum_{i=1,i\neq k}^{K} \sum_{l=1}^{L} C^{(1)}\bar{\mu}_{l,i}^{(1)} \mathbf{h}_{l,i}^H {r}_{u,li}^{(1)^{-1}} \mathbf{h}_{l,i} \mathbf{p}_i^{(1)} {\varepsilon}_{u,li}^{(1)MMSE} v_i^{(m)} {\varepsilon}_{u,li}^{(1)MMSE} \mathbf{p}_i^{(1)^H} \mathbf{h}_{l,i}^H {r}_{u,li}^{(1)^{-1}} \mathbf{h}_{l,i} \mathbf{p}_k^{(1)} \nonumber \\
 &&+ \sum_{i=1}^{K}\sum_{l=1}^{L} C^{(1)} \bar{\eta}_{l,i}^{(1)} \mathbf{h}_{l,i}^H {r}_{c,li}^{(1)^{-1}} \mathbf{h}_{l,i} \mathbf{p}_A^{(1)} {\varepsilon}_{c,li}^{(1)MMSE} w^{(m)} {\varepsilon}_{c,li}^{(1)MMSE} \mathbf{p}_A^{(1)^H} \mathbf{h}_{l,i}^H {r}_{c,li}^{(1)^{-1}}\mathbf{h}_{l,i}\mathbf{p}_k^{(1)}   \label{grad_g}
 \end{eqnarray}
\subsection{Equivalence of WSR and WMMSE Problems}
When (\ref{grad_f}) and (\ref{grad_g}) are compared, it is observed that for a given set of precoders $\mathbf{P}^{(m)}$ and corresponding error variances ${\varepsilon}_{u,lk}^{(m)MMSE}$ and ${\varepsilon}_{c,lk}^{(m)MMSE}$, $\nabla_{\mathbf{p}_k^{(m)}} f$ and  $\nabla_{\mathbf{p}_k^{(m)}} g$ become equal, if the weights $a_k$, $b$, $v_k$ and $w$ are chosen as
\begin{align}
{v}_{k}^{(m)} &= a_{k}{{\varepsilon}_{u,lk}^{(m)MMSE}}^{-1}, \label{MSE_weight1}
\end{align}for all $k$ and $l$ for which $\bar{\mu}_{l,k}^{(m)}>0$, and
\begin{align}
{w}^{(m)} &= b {{\varepsilon}_{c,lk}^{(m)MMSE}}^{-1}, \label{MSE_weight2}
\end{align}for all $k$ and $l$ for which $\bar{\eta}_{l,k}^{(m)}>0$. These relations could also be read as ${{\varepsilon}_{u,lk}^{(m)MMSE}} = a_{k}/v_k^{(m)}$, and $ {{\varepsilon}_{c,lk}^{(m)MMSE}} = b/w^{(m)} $. In other words, at the optimal solution, the MMSE values for the multicast messages are the same within a group, the MMSE values for the common message are the same over the entire set of users, and ${{\varepsilon}_{u,lk}^{(m)MMSE}}$ and $ {{\varepsilon}_{c,lk}^{(m)MMSE}}$ are equal to their own boundaries; i.e. $e^{-t_k/a_k}$, and $e^{-z/b}$ respectively. If this is not possible for any subset of users, then the corresponding Lagrange multipliers are zero. This fact also reveals that $\partial_{t_k} f(\mathbf{P}^{(m)},t_k,z)$~\footnotemark\footnotetext{The notation, $\partial_x f(x)$ denotes $\frac{\partial f(x)}{\partial x}$.} and $\partial_{t_k} g(\mathbf{P}^{(m)},t_k,z)$, and $\partial_z f(\mathbf{P}^{(m)},t_k,z)$ and $\partial_z g(\mathbf{P}^{(m)},t_k,z)$ are equivalent. In conclusion, the two problems have different variable names, but to find the optimal solution, we solve for exactly the same set of equations. Thus, the two optimization problems are equivalent with each other at the optimal solution.

\section{Iterative Precoder Design}\label{sec:precoder}
In this section, we suggest a suboptimal and iterative precoder design algorithm to solve the WSR problem. The WSR problem is non-convex, and hard to solve. It requires efficient algorithms that perform well in practice. Although the WSR problem does not result in an intuitive algorithm, its equivalent WMMSE problem does. As the WMMSE problem is composed of two parts, the transmit precoders and MSE receivers, one can perform alternating optimization between the two. 


To do so, we first define a new optimization problem, same as $\mathcal{P}_3$ (or $\mathcal{P}_4$) defined in (\ref{eqnP3}) (or (\ref{eqnP4})), but instead of defining $Q_1$ in (\ref{eqnQ1}) in terms of $\varepsilon_{c,lk}^{(m)MMSE}$ and $\varepsilon_{u,lk}^{(m)MMSE}$, we use $\varepsilon_{c,lk}^{(m)}$ and $\varepsilon_{u,lk}^{(m)}$ defined in (\ref{CMSE_u}) and (\ref{UMSE_k}). In other words, instead of assuming MMSE receivers, we first allow for any receiver structure.
\vspace{-0.2cm}
\begin{align}
\mathcal{P}_{5} :  \arg &\min_{\mathbf{P}^{(m)},t_k,z,W_{l,k}^{(m)},V_{l,k}^{(m)},v_k^{(m)},w^{(m)}} Q_2\\
\quad  \text{s.t. }  {\varepsilon}_{u,lk}^{(m)} &\leq e^{-t_k/a_k} ,\\
{\varepsilon}_{c,lk}^{(m)} &\leq e^{-z/b} \text{~and~}
(\ref{pow_const}).
\end{align}Then, the new Lagrangian objective function becomes
\begin{align}
h(\mathbf{P}^{(m)},t_k,z) &= Q_2  + \sum_{k=1}^{K}\sum_{l=1}^{L} \xi_{l,k}^{(m)} v_k^{(m)} (\varepsilon_{u,lk}^{(m)} - e^{-t_k/a_k}) + \sum_{k=1}^{K}\sum_{l=1}^{L} \psi_{l,k}^{(m)} w^{(m)} (\varepsilon_{c,lk}^{(m)} - e^{-z/b})  \nonumber\\
&\quad  + \beta^{(m)} \left( B^{(m)} \norm{\mathbf{p}_A^{(m)} }^2  + C^{(m)} \sum_{k=1}^{K} \norm{\mathbf{p}_k^{(m)}}^2 - E_{tx}\right), \label{lagrangian_h}
\end{align}
where $\beta^{(m)}$, $\xi_{l,k}^{(m)}$ and $\psi_{l,k}^{(m)}$ denote the Lagrange multipliers for the $m$-th signal model.

Studying the KKT conditions for this problem, similar to the analysis in the previous section, we can state the following theorem.
\begin{theorem}\label{theorem_Pk}
The common data receiver $W_{l,k}^{(m)}$ in (\ref{W_rec_last}), the multicast data receiver $V_{l,k}^{(m)} $ in (\ref{V_rec_sic_last}) and the Lagrange multiplier $\beta^{(m)}$ in (\ref{lamda}), transmit precoders $\mathbf{p}_k^{(1)}$ in (\ref{Pk_last}), $\mathbf{p}_k^{(2)}$ in (\ref{Pk_last2}), and $\mathbf{p}_c$ in (\ref{Pc_last2}) satisfy the KKT conditions for the optimization problem defined above with the Lagrangian function $h(\mathbf{P}^{(m)},t_k,z)$ defined in (\ref{lagrangian_h}).
\begin{align}
{W}_{l,k}^{(m)} &= B^{(m)} \mathbf{p}_A^{(m)^H} \mathbf{h}_{l,k}^H \left(B^{(m)} \mathbf{h}_{l,k} \mathbf{p}_A^{(m)} \mathbf{p}_A^{(m)^H} \mathbf{h}_{l,k}^H + \sum_{i=1}^{K}C^{(m)} \mathbf{h}_{l,k} \mathbf{p}_i^{(m)}\mathbf{p}_i^{(m)^H} \mathbf{h}_{l,k}^H + 1 \right)^{-1},\label{W_rec_last}\\
{V}_{l,k}^{(m)} &= C^{(m)} \mathbf{p}_k^{(m)^H} \mathbf{h}_{l,k}^H \left(\sum_{i=1}^{K} C^{(m)} \mathbf{h}_{l,k} \mathbf{p}_i^{(m)}\mathbf{p}_i^{(m)^H} \mathbf{h}_{l,k}^H + 1 \right)^{-1},\label{V_rec_sic_last}\\
\beta^{(m)} &= \frac{1}{E_{tx}}\sum_{k=1}^{K} \sum_{l=1}^{L}\left[\xi_{l,k}^{(m)} {v}_k^{(m)} {V}_{l,k}^{(m)} {V}_{l,k}^{(m)^{\ast}} + \psi_{l,k}^{(m)} {w}^{(m)} {W}_{l,k}^{(m)} {W}_{l,k}^{(m)^{\ast}} \right], \label{lamda}\\
\mathbf{p}_k^{(1)} &= \bigg(\beta^{(1)} \mathbf{I} + \sum_{i=1}^{K} \sum_{l=1}^{L}\left( \xi_{l,i}^{(1)} v_i^{(1)} C^{(1)} \mathbf{h}_{l,i}^H {V}_{l,i}^{(1)^{\ast}} {V}_{l,i}^{(1)} \mathbf{h}_{l,i} +  \psi_{l,i}^{(1)} w^{(1)} \mathbf{h}_{l,i}^H {W}_{l,i}^{(1)^{\ast}} {W}_{l,i}^{(1)} \mathbf{h}_{l,i} \right)\bigg)^{-1} \nonumber\\
&\quad \times \bigg[\sum_{i=1}^{K}\sum_{l=1}^{L}\left(\psi_{l,i}^{(1)} w^{(1)} B^{(1)} \mathbf{h}_{l,i}^H {W}_{l,i}^{(1)^{\ast}} - \psi_{l,i}^{(1)} w^{(1)} B^{(1)} \mathbf{h}_{l,i}^H {W}_{l,i}^{(1)^{\ast}} {W}_{l,i}^{(1)} \mathbf{h}_{l,i} (\mathbf{p}_A^{(1)} - \mathbf{p}_k^{(1)})\right) \nonumber \\
& \quad + \sum_{l=1}^{L}\xi_{l,k}^{(1)} v_k^{(1)} C^{(1)} \mathbf{h}_{l,k}^H {V}_{l,k}^{(1)^{\ast}} - \beta^{(1)} B^{(1)} (\mathbf{p}_A^{(1)} - \mathbf{p}_k^{(1)})\bigg],\label{Pk_last}\\
\mathbf{p}_k^{(2)} &= \bigg(\beta^{(2)} \mathbf{I} + \sum_{i=1}^{K} \sum_{l=1}^{L} \xi_{l,i}^{(2)} v_i^{(2)} \mathbf{h}_{l,i}^H {V}_{l,i}^{(2)^{\ast}} {V}_{l,i}^{(2)} \mathbf{h}_{l,i} + \sum_{i=1}^{K} \sum_{l=1}^{L} \psi_{l,i}^{(2)} w^{(2)} \mathbf{h}_{l,i}^H {W}_{l,i}^{(2)^{\ast}} {W}_{l,i}^{(2)} \mathbf{h}_{l,i} \bigg)^{-1} \nonumber \\
& \quad \times \bigg(\sum_{l=1}^{L}\xi_{l,k}^{(2)} v_k^{(2)} \mathbf{h}_{l,k}^H {V}_{l,k}^{(2)^{\ast}} \bigg),\label{Pk_last2}
\end{align}
\begin{align}
\mathbf{p}_c &= \bigg(\beta^{(2)} \mathbf{I} + \sum_{i=1}^{K} \sum_{l=1}^{L} \psi_{l,i}^{(2)} w^{(2)} \mathbf{h}_{l,i}^H {W}_{l,i}^{(2)^{\ast}} {W}_{l,i}^{(2)} \mathbf{h}_{l,i} \bigg)^{-1} \bigg( \sum_{i=1}^{K}\sum_{l=1}^{L}\psi_{l,i}^{(2)} w^{(2)} \mathbf{h}_{l,i}^H {W}_{l,i}^{(2)^{\ast}} \bigg).\label{Pc_last2}
\end{align}
\end{theorem}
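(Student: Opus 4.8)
The plan is to verify the theorem by deriving each stated expression from the stationarity part of the KKT system of the Lagrangian $h(\mathbf{P}^{(m)},t_k,z)$ in (\ref{lagrangian_h}), and then checking that primal feasibility, dual feasibility and complementary slackness are consistent with these choices. The first and easiest step concerns the receivers. For fixed precoders, weights and multipliers, $h$ is a convex quadratic in each scalar receiver $W_{l,k}^{(m)}$ and $V_{l,k}^{(m)}$ separately, because these appear only inside the error variances $\varepsilon_{c,lk}^{(m)}$ and $\varepsilon_{u,lk}^{(m)}$ of (\ref{CMSE_u}) and (\ref{UMSE_k}), weighted by the nonnegative coefficients $\psi_{l,k}^{(m)}w^{(m)}$ and $\xi_{l,k}^{(m)}v_k^{(m)}$. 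First I would set $\partial h/\partial W_{l,k}^{(m)\ast}=0$ and $\partial h/\partial V_{l,k}^{(m)\ast}=0$; since the quadratic coefficient is positive, these are precisely the MMSE conditions, and solving the two scalar equations reproduces (\ref{W_rec_last}) and (\ref{V_rec_sic_last}) (equivalently (\ref{W_rec}) and (\ref{V_rec_sic}) with the effective noise variances written out and $\delta=0$).

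Next I would treat the precoder stationarity conditions. With receivers and weights frozen, $h$ is again a convex quadratic in the precoders, so each condition $\nabla_{\mathbf{p}_k^{(m)}}h=0$ is linear in the precoder and rearranges into $\mathbf{A}_k^{(m)}\mathbf{p}_k^{(m)}=\mathbf{b}_k^{(m)}$. The matrix $\mathbf{A}_k^{(m)}$ collects the $\beta^{(m)}\mathbf{I}$ term coming from the power constraint together with the quadratic MSE contributions $\mathbf{h}_{l,i}^H V_{l,i}^{(m)\ast}V_{l,i}^{(m)}\mathbf{h}_{l,i}$ and $\mathbf{h}_{l,i}^H W_{l,i}^{(m)\ast}W_{l,i}^{(m)}\mathbf{h}_{l,i}$, while $\mathbf{b}_k^{(m)}$ gathers the linear cross terms; inverting $\mathbf{A}_k^{(m)}$ gives the closed forms. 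For signal model $2$, where $\mathbf{p}_A^{(2)}=\mathbf{p}_c$ is a free variable, the gradients with respect to $\mathbf{p}_k^{(2)}$ and $\mathbf{p}_c$ decouple cleanly (and $B^{(2)}=C^{(2)}=1$), producing (\ref{Pk_last2}) and (\ref{Pc_last2}) at once. For signal model $1$ the additional care is that $\mathbf{p}_A^{(1)}=\sum_{k}\mathbf{p}_k^{(1)}$ depends on every $\mathbf{p}_k^{(1)}$, so the common-message error $\varepsilon_{c,li}^{(1)}$ contributes to $\nabla_{\mathbf{p}_k^{(1)}}h$ through the chain rule for all $i$; isolating the part proportional to $\mathbf{p}_k^{(1)}$ and expressing the remainder via $\mathbf{p}_A^{(1)}-\mathbf{p}_k^{(1)}=\sum_{i\neq k}\mathbf{p}_i^{(1)}$ is what yields the particular grouping seen in (\ref{Pk_last}).

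Finally I would recover $\beta^{(m)}$ from the active power budget. Assuming (\ref{pow_const}) holds with equality, so that $\beta^{(m)}>0$ by complementary slackness, I would left-multiply each precoder stationarity condition by that precoder's Hermitian transpose and sum over all precoders. The power-gradient part then collapses to $\beta^{(m)}(B^{(m)}\norm{\mathbf{p}_A^{(m)}}^2+C^{(m)}\sum_k\norm{\mathbf{p}_k^{(m)}}^2)=\beta^{(m)}E_{tx}$. I expect the main obstacle to be proving that the remaining MSE terms collapse exactly to $\sum_{k,l}[\xi_{l,k}^{(m)}v_k^{(m)}V_{l,k}^{(m)}V_{l,k}^{(m)\ast}+\psi_{l,k}^{(m)}w^{(m)}W_{l,k}^{(m)}W_{l,k}^{(m)\ast}]$, which is what (\ref{lamda}) requires. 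This is not a mere contraction: it uses the MMSE identities from the first step, written as $V_{l,k}^{(m)}G_{l,k}^{(m)}=C^{(m)}\mathbf{p}_k^{(m)H}\mathbf{h}_{l,k}^H$ and $W_{l,k}^{(m)}D_{l,k}^{(m)}=B^{(m)}\mathbf{p}_A^{(m)H}\mathbf{h}_{l,k}^H$, where $D_{l,k}^{(m)}$ and $G_{l,k}^{(m)}$ denote the total effective received powers (including noise) for the common and multicast estimators. Substituting these, and expressing the summed received power $\sum_j|\mathbf{h}_{l,k}\mathbf{p}_j^{(m)}|^2$ through $G_{l,k}^{(m)}-1$ (respectively $D_{l,k}^{(m)}-1$), each linear term turns into a multiple of $V_{l,k}^{(m)}V_{l,k}^{(m)\ast}G_{l,k}^{(m)}$ or $W_{l,k}^{(m)}W_{l,k}^{(m)\ast}D_{l,k}^{(m)}$, while the matching quadratic term equals the same multiple of $V_{l,k}^{(m)}V_{l,k}^{(m)\ast}(G_{l,k}^{(m)}-1)$ or $W_{l,k}^{(m)}W_{l,k}^{(m)\ast}(D_{l,k}^{(m)}-1)$; the two telescope and leave only the claimed sum. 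Dividing by $E_{tx}$ gives (\ref{lamda}). I would close by noting that dual feasibility $\beta^{(m)}\geq 0$ is automatic since the right-hand side of (\ref{lamda}) is a sum of nonnegative terms, and that the auxiliary-variable and weight stationarity conditions (as in the equivalence analysis of the previous section) pin down the multipliers $\xi_{l,k}^{(m)}$ and $\psi_{l,k}^{(m)}$ and arrange primal feasibility of the MSE constraints.
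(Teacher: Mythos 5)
Your proposal is correct and follows essentially the same route as the paper's Appendix B: the receivers and precoders are obtained from the stationarity conditions of $h(\mathbf{P}^{(m)},t_k,z)$, and $\beta^{(m)}$ is recovered by contracting the precoder stationarity conditions with the precoders' Hermitian transposes, invoking the power constraint with equality, and using the receiver stationarity conditions to cancel the quadratic MSE terms. The only cosmetic difference is that the paper organizes this cancellation by forming a second aggregate identity (the receiver conditions post-multiplied by ${W}_{l,k}^{(m)\ast}$, ${V}_{l,k}^{(m)\ast}$ and summed) and equating the two aggregates' right-hand sides, whereas you substitute the receiver identities directly into the contracted precoder equation and telescope --- the same algebra in a different order.
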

\begin{proof}
The proof is provided in Appendix \ref{derive_Pk}.
\end{proof}
\begin{remark}
The receivers $W_{l,k}^{(m)}$ and $V_{l,k}^{(m)}$ in (\ref{W_rec_last}) and (\ref{V_rec_sic_last}) are exactly equal to the MMSE receivers given in (\ref{W_rec}) and (\ref{V_rec_sic}).
\end{remark}
\begin{algorithm}[t]\label{algo_1}
    \SetAlgoLined
    input: $m$, $a_k$, $b$, $\epsilon$, $\alpha$, $E_{tx}$, $\Upsilon$\;
    set $n = 0$, $\left[\mathbf{p}_k^{(m)}\right]^{(n)} = \mathbf{p}_k^{init} \: \forall k $, $\nu = \log_2 (KL)/\epsilon $;\\
    iterate;\\
       \qquad 1. update $n = n + 1$\\
       \qquad 2. compute $\big[{W}_{l,k}^{(m)}\big]^{(n)}$ using (\ref{W_rec_last}) \\
       \qquad 3. compute $\big[{V}_{l,k}^{(m)}\big]^{(n)} \: $ using (\ref{V_rec_sic_last}) \\
       \qquad 4. compute $\varepsilon_{c,lk}^{(m)}$, $\varepsilon_{u,lk}^{(m)}$  using (\ref{CMSE_u}) and (\ref{UMSE_k}) \\
       \qquad 5. compute $w^{(m)}$, $v_k^{(m)}$ using (\ref{MSE_weight1}) and (\ref{MSE_weight2})\\
       \qquad 6. compute $\big[\xi_{l,k}^{(m)}\big]^{(n)}$, $\big[\psi_{l,k}^{(m)}\big]^{(n)}$ using (\ref{xi_l}) and (\ref{psi_l})\\
       \qquad 7. compute $\big[\beta^{(m)}\big]^{(n)}$ using (\ref{lamda}) \\
       \qquad 8. compute $\big[\mathbf{p}_k^{(m)}\big]^{(n)}$ using (\ref{Pk_last}) or (\ref{Pk_last2}), and $\big[\mathbf{p}_c\big]^{(n)}$ using (\ref{Pc_last2})  \\     
       \qquad 9. scale $\big[\mathbf{p}_k^{(m)}\big]^{(n)}$ such that \\
       \qquad \qquad $ B^{(m)} \norm{\big[\mathbf{p}_A^{(m)}\big]^{(n)} }^2 + C^{(m)} \sum_{k=1}^{K_1} \norm{\big[\mathbf{p}_k^{(m)}\big]^{(n)}}^2 = E_{tx} $ \nonumber \\
       \qquad 10. $\mathbf{If}$ $\Tr\big\{\left(\big[\mathbf{P}^{(m)}\big]^{(n)} - \big[\mathbf{P}^{(m)}\big]^{(n-1)}\right) \left(\big[\mathbf{P}^{(m)}\big]^{(n)} -\big[\mathbf{P}^{(m)}\big]^{(n-1)}\right)^H \big\} < \Upsilon $ then\\
       \qquad \qquad terminate \\
       \qquad $\mathbf{else}$\\
       \qquad \qquad go to Step 1
\caption{Iterative WMMSE}
\end{algorithm}
If it was possible to solve the equations stated in Theorem \ref{theorem_Pk} in closed form, then the optimal solution would be obtained. As this is not possible, the iterative WMMSE algorithm, given in Algorithm \ref{algo_1}, iterates between using the receiver structures (\ref{W_rec_last}), (\ref{V_rec_sic_last}), MSE weights (\ref{MSE_weight1}), (\ref{MSE_weight2}) and transmit precoders (\ref{Pk_last}), (\ref{Pk_last2}) and (\ref{Pc_last2}), until convergence to the local optimum; i.e. the error power of two consequtive precoders is small enough. 

Calculating the Lagrange multipliers $\xi_{l,k}^{(m)}$ and $\psi_{l,k}^{(m)}$, $\forall l, k$ and $m$ in Algorithm \ref{algo_1} is not trivial. Applying the exponential penalty method for solving min-max problems defined in \cite{Li2003_ExponentialPenaltyMethod}, in the algorithm in each iteration, we update $\xi_{l,k}^{(m)}$ and $\psi_{l,k}^{(m)}$ according to
\begin{align}
\xi_{l,k}^{(m)} &= \frac{\exp\{\nu (\varepsilon_{u,lk}^{(m)} - e^{-t_k/a_k})\}}{\sum_{l=1}^{L}\exp\{\nu (\varepsilon_{u,lk}^{(m)} - e^{-t_k/a_k})\}},\label{xi_l}\\
\psi_{l,k}^{(m)} &= \frac{\exp\{\nu (\varepsilon_{c,lk}^{(m)} - e^{-z/b})\}}{\sum_{k=1}^{K}\sum_{l=1}^{L}\exp\{\nu(\varepsilon_{c,lk}^{(m)} - e^{-z/b})\}}.\label{psi_l}
\end{align}Here $\nu$ is a constant and as long as $\nu \geq \log(KL)/\epsilon $, the solution is $\epsilon$-optimal. Note that, this choice satisfies the KKT conditions on $\xi_l^{(m)}$ and $\psi_{l,k}^{(m)}$ since $\sum_{l=1}^{L} \xi_{l,k}^{(m)} = 1$, $\sum_{k=1}^{K}\sum_{l=1}^{L} \psi_{l,k}^{(m)} = 1$ and $\xi_{l,k}^{(m)} \geq 0$, $\psi_{l,k}^{(m)} \geq 0$. 

Note that, papers such as \cite{Joudeh2016_SumRateMax} and  \cite{Joudeh2017_RateSplittingforMaxMin} utilize the CVX tool designed for MATLAB \cite{CVX} and do not need to solve for the Lagrange multipliers or for the optimal structures for the transmit precoders, the common data receivers and the multicast data receivers defined in (\ref{W_rec_last})-(\ref{Pc_last2}). When these optimal structures are used, the algorithm finishes in approximately 10 minutes, whereas the algorithm lasts for hours when the CVX tool is employed. 

As there is a total power constraint, and each iteration of the algorithm increases the objective function, the proposed WMMSE algorithm converges to a limit value. Due to the non-convexity of the problem, this limit value is not guaranteed to be the global optimum. However, the algorithm employs the precoders and the MMSE receivers stated in Theorem 1 that satisfy the KKT conditions of the WMMSE problem, and results in a locally optimum solution. Following similar steps as in \cite[Section IV-A]{Christensen2008_WeightedSumRate} and  \cite{Kaleva2016_DecentralizedSumRateMaximization}, one can prove convergence in full detail. 

\section{Simulation Results}\label{sec:simresult}
In this section we provide simulation results to compare the two new precoders Algorithm~\ref{algo_1} generates. We denote these precoders WMMSE1 and WMMSE2 for signal models 1 and 2 respectively. In the following simulations we consider algorithm convergence, and sum rate for different settings. In the simulations, the entries in $\mathbf{h}_{l,k}$ are assumed to be circularly symmetric complex Gaussian distributed random variables with zero mean and unit variance, and are independent and identically distributed. The presented results are averaged over $10^3$ channel realizations. Ideal Gaussian codebooks are used for transmission. In the algorithm, the maximum number of iterations is limited to $100$, and both $\epsilon$ and $\Upsilon$ are set to $10^{-3}$.
\begin{figure}[ht]
\centering
\includegraphics[width=4.6in]{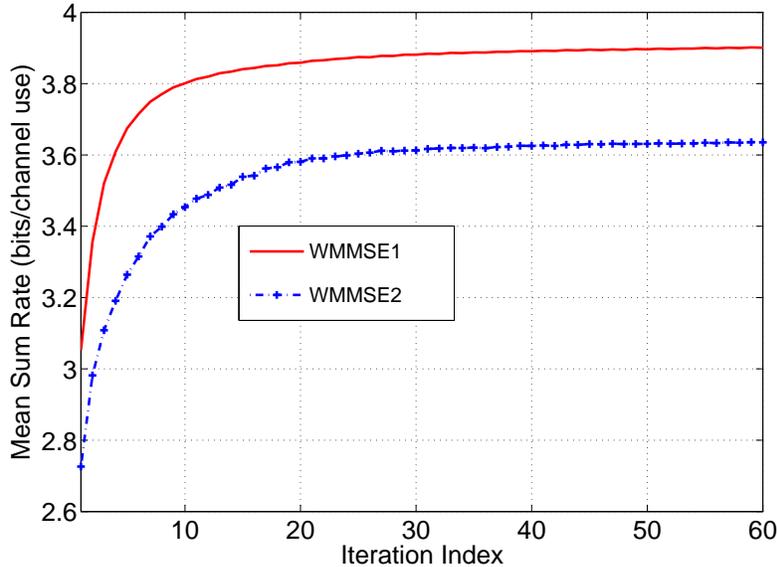}
\caption{Sum-rate convergence performance for $M=K=L=2$. Transmit SNR is set to 15 dB, and $a_k = b = 1$. The optimal $\alpha$ is selected for WMMSE1.}\label{iterSumrate_M2_K2_L2}
\end{figure}
\begin{figure}[ht]
\centering
\includegraphics[width=4.6in]{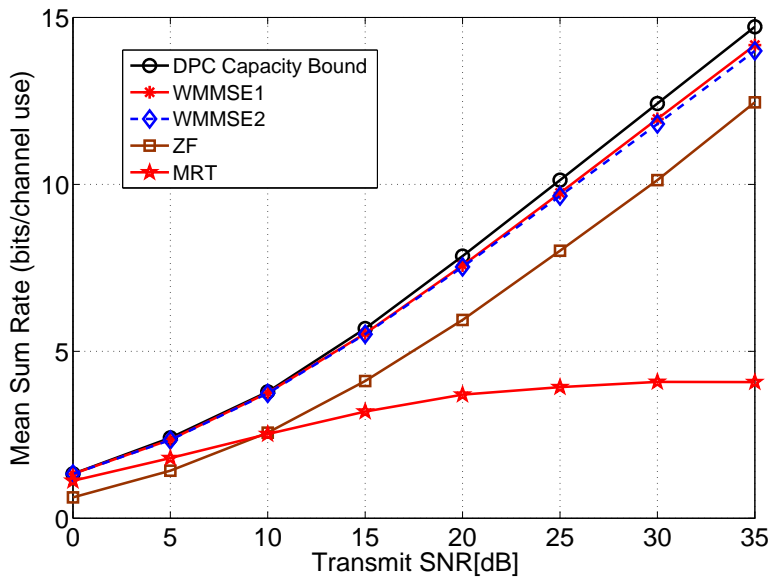}
\caption{The sum-rate $\left(\sum_{k=1}^{K}R_{u,k} + R_{c}\right)$ curves for $M=2$, $K=2$, $L=1$, $a_k=1$, $b=1$. The optimal $\alpha$ is selected for WMMSE1, ZF and MRT.}\label{SR_2_2_1_optAlfa}
\end{figure}

In the simulation results we will provide comparisons with ZF and MRT precoding. Thus, we first describe the two schemes briefly.
\begin{description}
\item[ZF precoding:] The ZF precoder aims to cancel all interference at all users. When, the number of users is less then or equal to the number of transmit antennas ($KL \leq M$), interference cancellation is easily achieved by the ZF precoder, $\mathbf{P}^{ZF}$ $\in \mathbb{C}^{M \times KL}$, given by \cite{Joham2005_LinearTransmitProcessing}
\begin{align}
\mathbf{P}^{ZF} &\triangleq  \sqrt{\frac{E_{tx}}{\Tr((\mathbf{H}\mathbf{H}^H)^{-1})}}\mathbf{H}^H(\mathbf{H}\mathbf{H}^H)^{-1}, \label{ZF_description}
\end{align}where $\mathbf{H} = [\mathbf{h}_{1,1},\mathbf{h}_{1,2}, \ldots,\mathbf{h}_{1,L},\ldots, \mathbf{h}_{K,1},\mathbf{h}_{K,2}, \ldots, \mathbf{h}_{K,L}]^T$ is the composite channel gain matrix of all users, of size $KL \times M$. However, when $KL > M$, total interference cancellation at all users is no longer feasible. Instead, user selection is necessary before designing the ZF precoders. 

In the simulations, we assume $K = M$. Thus, we select a single user in each group (denoted by $l_k$) and design the ZF precoder for this set of users only, where $\mathbf{H}' = [\mathbf{h}_{1,l_1}, \ldots, \mathbf{h}_{K, l_K}]^T$ is the composite channel gain matrix of all selected users. In order to determine the selected set of users, we consider all possible $KL$ channel gain matrices and choose the one, which maximizes the determinant $\left| \mathbf{H}'\mathbf{H}'^H \right|$. This method was shown to minimize MMSE in \cite{Dao2010_UserSelectionAlgo}. We would like to mention that this type of user selection ignores the presence of the common message. The common message is superposed onto the private multicast messages according to signal model 1 during transmission. 

Note that, this method designs ZF precoders according to a single user in each group. Then the achievable multicast rate for a group is determined by the minimum of all multicast rates achievable by each user in a group as in (\ref{eqn:minRu}). Similarly, the achievable common data rate is determined according to (\ref{eqn:minRc}).
\item[MRT precoding:] MRT aims to achieve the highest signal gain at the receivers, and ignores interference. The MRT precoder for cluster $k$ is given in \cite{Sadeghi2018_MMFMassiveMIMO} as
\begin{align}
\mathbf{p}_{k}^{MRT} &= \Gamma \sum_{l=1}^{L}   \mathbf{h}_{l,k},
\end{align} where $\Gamma$ is set to \[ \Gamma = \frac{E_{tx}}{\sum_{k=1}^K \left| \sum_{l=1}^L h_{k,l} \right|^2}\] to satisfy the total power constraint in (\ref{pow_const}). As in the ZF scheme, described above, we assume the common message is superposed onto the private multicast messages.
\end{description}

Fig. \ref{iterSumrate_M2_K2_L2} displays convergence properties for the two iterative WMMSE algorithms. In the figure, total transmit power, $E_{tx}$, is set to $15$ dB and $a_k = b = 1$ and optimal $\alpha$ is chosen for WMMSE1. The initial precoder matrix, $\mathbf{P}^{init}$, is chosen as the zero forcing precoder, as described above. The figure confirms that both algorithms converge fast.  

Note that, the parameter $\alpha$ distributes power unequally among common and multicast data. This results in their average received signal to noise ratios (SNR) to be different. In order to be able to plot the common and multicast data rates on the same graph, in Figs. \ref{SR_2_2_1_optAlfa}-\ref{SR_222_diffWeights} and \ref{SR_441_vs_442_vs_444}-\ref{SR_542_vs_432_vs_423}, the horizontal axis is defined as the total transmit SNR, namely $E_{tx}/\sigma^2$. Here $\sigma^2$ is the noise variance and assumed to be $1$.

Fig. \ref{SR_2_2_1_optAlfa} compares WMMSE1 and WMMSE2 with DPC, ZF and MRT, for $M=K=2$, $L=1$ and $a_k = b=1$. Although this constitutes a limited setting in terms of number of antennas and users, in the literature, capacity region results for MIMO broadcast channels with a common message only exist for this configuration \cite{ekrem2012outer}. For WMMSE1, ZF and MRT, the optimal $\alpha$ is selected for each protocol for every channel realization. It is observed that both WMMSE1 and WMMSE2 achieve the optimal performance at low SNR, and close to optimal at high SNR. In this setting, ZF removes all undesired interference and is parallel to WMMMSE1 and WMMSE2. However, MRT is quite suboptimal as it performs no interference mitigation. 

\begin{figure}[t]
\centering
\includegraphics[width=4.6in]{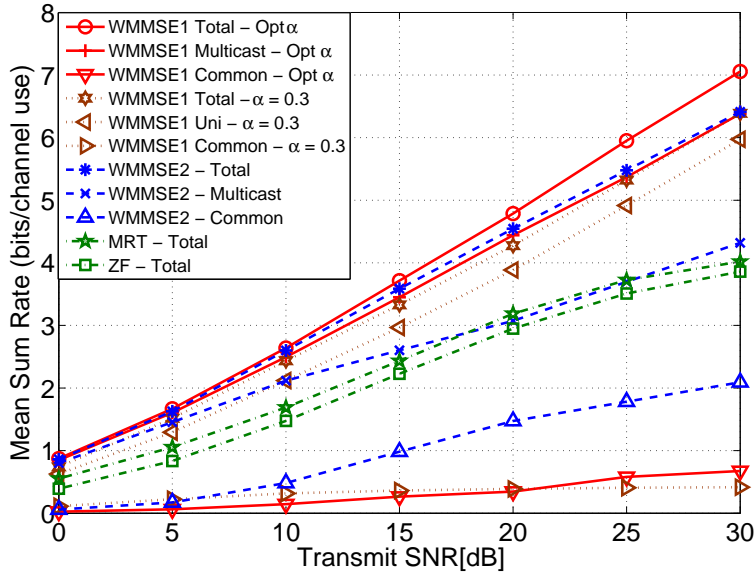}
\caption{The common data rate $\left(R_{c}\right)$, the sum multicast data rate $\left(\sum_{k=1}^{K}R_{u,k}\right)$, and the sum-rate $\left(\sum_{k=1}^{K}R_{u,k} + R_{c}\right)$ curves for $M=K=L=2$, $a_k=b=1$.}\label{SR_222}
\end{figure}
\begin{figure}[ht]
\centering
\includegraphics[width=4.6in]{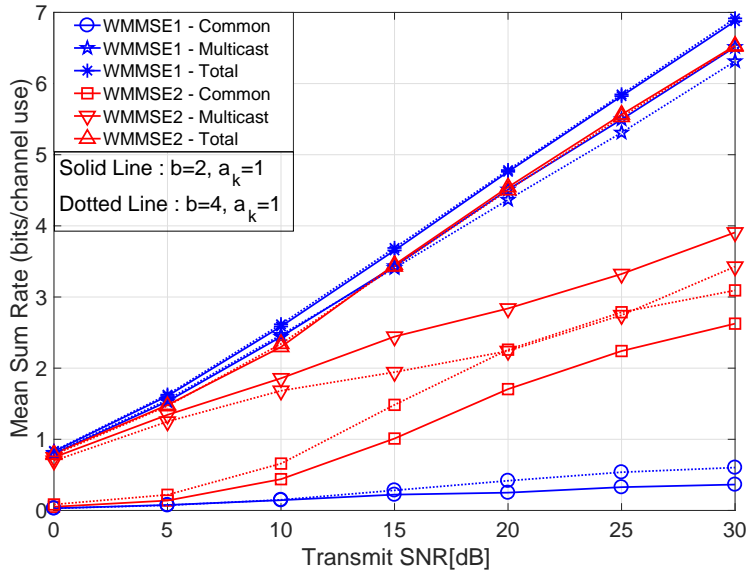}
\caption{The common data rate $\left(R_{c}\right)$, the sum multicast data rate $\left(\sum_{k=1}^{K}R_{u,k}\right)$, and the sum-rate $\left(\sum_{k=1}^{K}R_{u,k} + R_{c}\right)$ curves for $M=K=L=2$ and weights $a_k = 1$ and $b = \{2,4\}$.}\label{SR_222_diffWeights}
\end{figure}
Fig. \ref{SR_222} shows sum rate curves for $M=K=L=2$, and $a_k = b= 1$. The results for WMMSE1 are presented for both optimal $\alpha$ values and for a fixed $\alpha$ value. It is observed that WMMSE1 for optimal $\alpha$ outperforms all protocols in weighted sum rate. The sum multicast rate of WMMSE1, both for optimal and fixed $\alpha$, is larger than that of WMMSE2. On the other hand, WMMSE2 achieves a significantly larger common data rate, at the expense of sum multicast rate. This is because, WMMSE1 mainly designs multicast data precoders and sends the common data via the sum precoder $\mathbf{p}_A^{(1)} = \sum_{k=1}^K \mathbf{p}_k^{(1)}$. However, WMMSE2 assigns a separate precoder to common data $\mathbf{p}_c$. This way, it can achieve higher values for $R_c$, but the loss in multicast data rates is significant and performs worse than WMMSE1 for optimal $\alpha$ in terms of the total weighted sum rate. However, this is not the case if $\alpha$ is fixed. Note that, there is a tradeoff between WMMSE1 and WMMSE2. The precoder definition of WMMSE2 is more general and includes WMMSE1 definition as a special case. In other words, in WMMSE2, the common message precoder can be arbitrary, whereas in WMMSE1 it is contrained to be the sum of the private multicast data precoders. On the other hand, the WMMSE algorithm calculates the best directions and the power levels for all precoders $\mathbf{p}_k$ and $\mathbf{p}_A^{(1)}$ jointly. It does not perform direction and power optimization steps separately. The WMMSE1 algorithm avoids this problem via superposition. WMMSE1 allows the designer to adjust the precoder power levels separately via the parameter $\alpha$. If $\alpha$ is optimized, this new degrees of freedom (power optimization gain) becomes dominant over a more constrained precoder definition and WMMSE1 performs better than WMMSE2. Finally, we observe that MRT performs better than ZF, both MRT and ZF perform poorly in terms of weighted sum rate. When there are multiple users in a group, neither MRT, nor ZF can manage interference well. ZF cancels interference at the selected user in each group, but the interference at the other users in the group are not necessarily cancelled. MRT, on the other hand, behaves as if the group itself is one single user with an equivalent channel gain $\sum_{l=1}^{L} \mathbf{h}_{l,k}$, which does not provide any individual adaptation.

Fig. \ref{SR_222_diffWeights} shows sum rate curves for $M=K=L=2$, $a_k=1$ and $b = \{2,4\}$. The results for WMMSE1 are presented for optimal $\alpha$ values. It is observed that when the weight for the common message, $b$, increases, the common data rate  increases and the sum multicast data rate decreases for both protocols. However, the total data rate does not change.

\begin{figure}[t]
\centering
\includegraphics[width=4.6in]{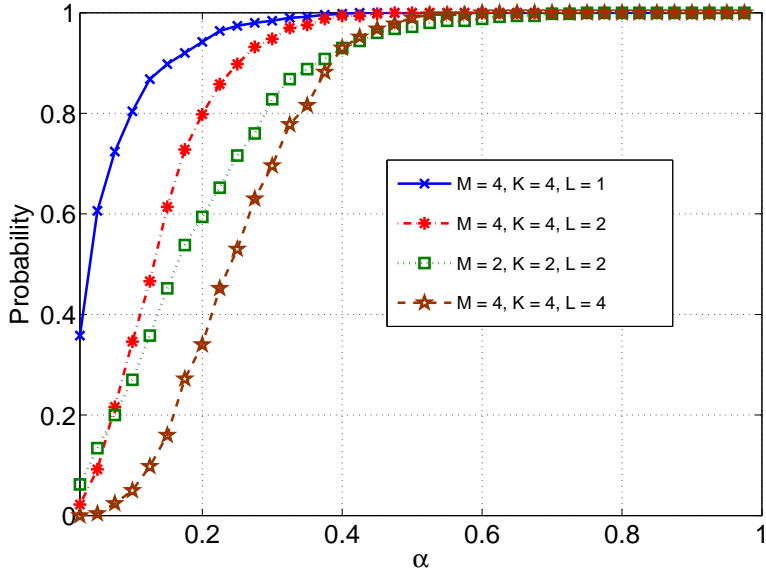}
\caption{Cumulative distribution function of optimal $\alpha$ for different scenarios.}\label{alfaCDF}
\end{figure}
\begin{figure}[t!]
\centering
\includegraphics[width=4.6in]{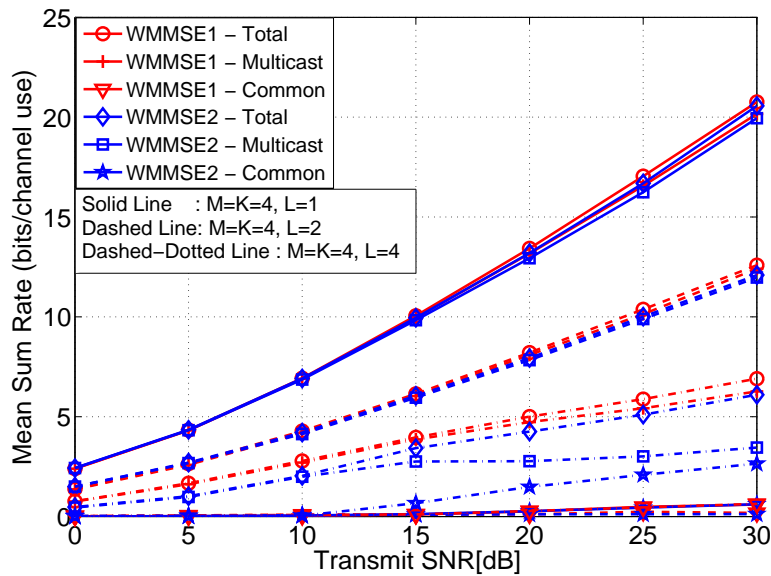}
\caption{The common data rate $\left(R_{c}\right)$, the sum multicast data rate $\left(\sum_{k=1}^{K}R_{u,k}\right)$, and the sum-rate $\left(\sum_{k=1}^{K}R_{u,k} + R_{c}\right)$ curves for $\{M=K=4, L=1\}$, $\{M=K=4, L=2\}$ and $\{M=K=4,L=4\}$ for $a_k=b=1$. 
For WMMSE1 optimal $\alpha$ values are used.}\label{SR_441_vs_442_vs_444}
\end{figure}
\begin{figure}[!h]
\centering
\includegraphics[width=4.6in]{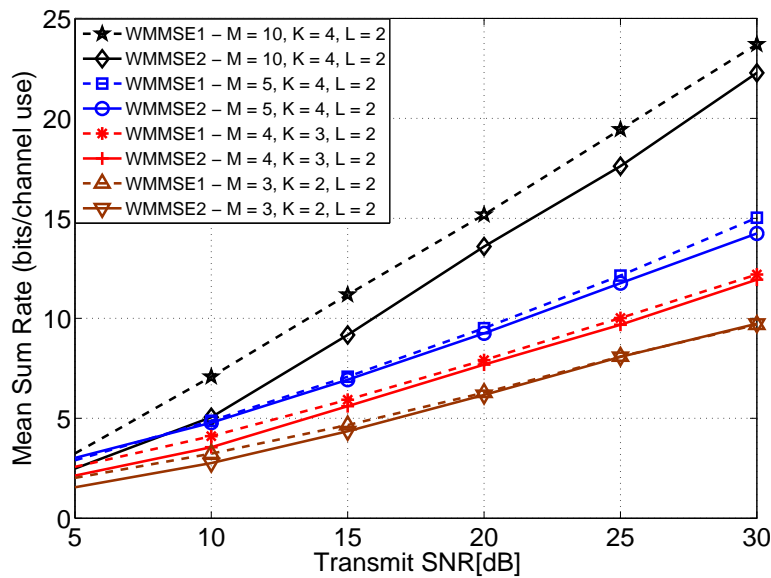}
\caption{The sum-rate $\left(\sum_{k=1}^{K}R_{u,k} + R_{c}\right)$ curves for $\{M=10, K=4, L=2\}$, $\{M=5, K=4, L=2\}$, $\{M=4, K=3, L=2\}$ and $\{M=3, K=2, L=2\}$ for $a_k=b=1$. For WMMSE1 optimal $\alpha$ values are used.}\label{SR_542_vs_432_vs_423}
\end{figure}

Fig. \ref{alfaCDF} shows the cumulative distribution function of optimal $\alpha$ for $M=K=L=2$ and for $M=K=4$ and $L=1,2,4$. The $M=K=L=2$ case is shown to accompany Fig.~\ref{SR_222} and the curves for $M=K=4$ and $L=1,2,4$ accompany Fig.~\ref{SR_441_vs_442_vs_444}. For $M=K=4$ and $L=1,2,4$, as $L$ increases, optimal $\alpha$ values become larger. When $L$ increases, interference management becomes harder and private multicast data rates decrease (as confirmed in Fig.~\ref{SR_441_vs_442_vs_444}). However, in order to sustain the common data rate, larger $\alpha$ values become beneficial. The comparison between $M=K=L=2$ and $M=K=4, L=2$ is also in line with this observation. In the latter scenario, interference management is easier, and thus its cumulative distribution function is to the left $M=K=L=2$.


Fig. \ref{SR_441_vs_442_vs_444} investigates the effect of number of users in a group for $M= K = 4$ and $L=1,2$ or $4$. The figure shows sum rate curves for WMMSE1 for optimal $\alpha$ values. As the number of users in a group increase, both $R_{u,k}$ and $R_{c}$ become the minimum of a larger number of random variables, and become smaller. Thus, the weighted sum rate decreases for increasing number of users. As $L$ increases, interference management at the receivers becomes harder. Moreover, with increasing number of users in a group, the difference between WMMSE1 with optimal $\alpha$ and WMMSE2 becomes more significant. WMMSE1 keeps multicasting rates as high as possible, whereas WMMSE2 prefers increasing the common data rate, which is not sufficient to compensate for the decrease in the sum multicast rate.


\begin{figure}[!t]
\centering
\includegraphics[width=4.6in]{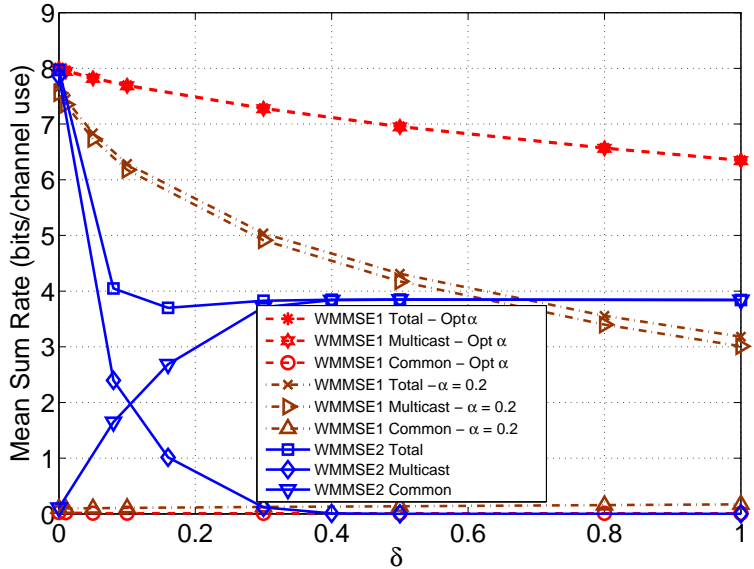}
\caption{The common data rate $\left(R_{c}\right)$, the sum multicast data rate $\left(\sum_{k=1}^{K}R_{u,k}\right)$, and the sum-rate $\left(\sum_{k=1}^{K}R_{u,k} + R_{c}\right)$ curves for $\{M=K=4,L=2\}$ vs. different $\delta$ values for $a_k=b=1$ and transmit SNR is set to 20 dB. For WMMSE1, both optimal $\alpha$ and fixed $\alpha$ is considered.}\label{SR_442_vs_Delta}
\end{figure}

Fig.~\ref{SR_542_vs_432_vs_423} compares WMMSE1 for optimal $\alpha$ and WMMSE2 performance for different number of base station antennas and clusters. In particular, the figure shows  weighted sum rate curves for $\{M=10, K=4, L=2\}$, $\{M=5, K=4, L=2\}$, $\{M=4, K=3, L=2\}$ and $\{M=3, K=2, L=2\}$. We observe that WMMSE1 for optimal $\alpha$ performs better than WMMSE2 in all cases. Moreover, the gains become significant if $M>K \times L$.


Fig. \ref{SR_442_vs_Delta} shows the effect of imperfect SIC on WMMSE1 and WMMSE2 performances. In the figure, $M=K=4$, $L=2$, $a_k=b=1$, transmit SNR is 20 dB and WMMSE1 is displayed for both optimal $\alpha$ and for $\alpha=0.2$. In obtaining the figure, we assume that $\delta$ is unknown at the transmitter and the receivers. Thus the precoders and the receivers designed for perfect SIC are continued to be used. The figure reveals that WMMSE1 (either for optimal or for fixed $\alpha$) is quite robust against imperfect SIC. WMMSE1 with optimal $\alpha$ shows almost no degradation with increasing $\delta$, and WMMSE1 for $\alpha = 0.2$ is better than WMMSE2 for $\delta \leq 0.6$. Note that, residual interference power is related with $\delta^2$, and practically $\delta$ is never as large as 0.6. The figure also emphasizes the fundamental difference between WMMSE1 and WMMSE2 once again. While WMMSE1 has a much higher sum multicast rate than of WMMSE2, WMMSE2 transmits at a higher common data rate than WMMSE1. 



Finally, we find the algorithmic complexities for WMMSE1 and WMMSE2 in Table~\ref{table_1} using the techniques discussed in \cite{Hunger_Complexity2007}. Note that, Table~\ref{table_1} does not show the complexity of searching for the optimal $\alpha$ for WMMSE1. Table~\ref{table_1} reveals that both WMMSE1 and WMMSE2 are cubic in $M$. For $M=K=L=2$, WMMSE1 and WMMSE2 respectively require 1113 and 989 operations. For $M=K=4,L=2$, WMMSE1 and WMMSE2 respectively require 7033 and 5929 operations. 

\begin{table*}[t!]
\caption{Complexity of Proposed Algorithms}\label{table_1}
\begin{tabular}{|c|c|}
  \hline
  \multirow{1}{*}{Algorithm} & Computational Complexity (Number of Matrix Operations)
    \\ \hline
 \multirow{2}{*}{WMMSE1} & \small{$M^3K + M^2(3K^2L + 2K) + M(24K^2L + 21KL +3K -1) + K^2(3L^2 + 24L)$}\\ 
 & \small{$ + K(3L^2 + 38L +3) +1  $} \\ \hline
 \multirow{2}{*}{WMMSE2} & \small{$M^3K + M^2(2K^2L + KL + 2K + 1) + M(20K^2L + 24KL + K) + K^2(3L^2 + 14L)$}\\ 
 & \small{$+ K(3L^2 + 43L) +1$} \\ \hline
  \end{tabular}
\end{table*}

\section{Conclusion}\label{sec:conclusion}

In this paper we study multi-group multicasting with a common message in a downlink MIMO broadcast channel. We assume the multicast groups are disjoint and we investigate the precoder design problem for maximum weighted sum rate. We first prove that weighted sum rate maximization problem is equivalent to the weighted minimum mean square error minimization problem. As both problems are non-convex and highly complex, we suggest a low-complexity, iterative precoder design algorithm inspired by the weighted minimum mean square error minimization problem. The algorithm iterates between receiver design for a given precoder, and then updates the precoder for a given receiver until convergence. We apply this algorithm on two transmission schemes. In the first scheme, the common message and private multicast messages are superposed. In the second scheme, the common message is appended to the private multicast message vector. We show that the first scheme consistently performs better than the second tranmission scheme in all settings. We understand that there is a fundamental difference between the two schemes. Although the second scheme is more general by definition, the first scheme introduces the freedom of power adaptation when employed within the proposed algorithm. Secondly, the first scheme favors multicast transmission, and the second scheme puts more emphasis on the common message. Future work includes studying the effect of successive cancellation order, and investigating the performance of the proposed schemes in a massive MIMO setting with imperfect channel state information at the transmitter.

 \begin{appendices}
 \section{}\label{derive_gradient_f}
 Due to space limitations, in this appendix, we derive $\nabla_{\mathbf{p}_k^{(m)}}f(\mathbf{P}^{(m)},t_k,z)$ for only $m=1$. For notational convenience, we drop the upper indices $(m)$. The Lagrangian objective function is given by
 \begin{align}
 f(\mathbf{P},t_k,z) &= -\sum_{k=1}^{K}t_{k} - z + \underbrace{\lambda \big(\alpha \Tr (\mathbf{p}_A \mathbf{p}_A^H) + \bar{\alpha} \sum_{k=1}^{K} \Tr (\mathbf{p}_k\mathbf{p}_k^H) - E_{tx}\big)}_\text{C} \nonumber \\
 &\:\quad+\underbrace{\sum_{k=1}^{K}\sum_{l=1}^{L}\mu_{l,k}(t_k - a_k R_{u,lk})}_\text{A} + \underbrace{\sum_{k=1}^{K}\sum_{l=1}^{L}\eta_{l,k}(z - b R_{c,lk})}_\text{B}.\label{append_lagrangian_SR}
 \end{align}
 \subsubsection{Gradient of A}To calculate the gradient of A, in (\ref{append_lagrangian_SR}) we need $\nabla_{\mathbf{p}_k} R_{u,lk}$ for both $i=k$ and $i \neq k$ for $l=1,\ldots,L$. Using (\ref{error_cov_uni}) and (\ref{rate_u_errorCov}), we have $\nabla_{\mathbf{p}_k}R_{u,lk}= (\nabla_{\mathbf{p}_k}{\varepsilon}_{u,lk}^{-1}) {\varepsilon}_{u,lk}$. Note that the noise variance ${r}_{u,lk}^{-1}$ is independent from $\mathbf{p}_k$, and $\nabla_{\mathbf{X}}(\mathbf{X}^H\mathbf{A}\mathbf{X})= \mathbf{A}\mathbf{X}$ \cite[ch E.3]{moon_stirling_math}. Then
 \begin{align}
 \nabla_{[\mathbf{p}_k]_{m}}{\varepsilon}_{u,lk}^{-1} &= \mathbf{e}_m^H \mathbf{h}_{l,k}^H {r}_{u,lk}^{-1} \mathbf{h}_{l,k} \mathbf{p}_k, \label{grad_cov_E_eps_k_mn}
 \end{align}where $\mathbf{e}_m$ is the unity column vector with $1$ at the $m^{th}$ element and zeros elsewhere and is size of $M\times 1$. As $[\nabla_{\mathbf{p}_k}R_{u,lk}]_{m} = \nabla_{[\mathbf{p}_k]_{m}}R_{u,lk}= \mathbf{e}_m^H \mathbf{h}_{l,k}^H {r}_{u,lk}^{-1} \mathbf{h}_{l,k} \mathbf{p}_k {\varepsilon}_{u,lk}$, we have
 \begin{align}
 \nabla_{\mathbf{p}_k} R_{u,lk} &= \mathbf{h}_{l,k}^H {r}_{u,lk}^{-1} \mathbf{h}_{l,k} \mathbf{p}_k {\varepsilon}_{u,lk}\label{grad_R_k}.
 \end{align}Next, we compute $\nabla_{\mathbf{p}_k}R_{u,li}$, for $i \neq k$ as
 \begin{align}
 \nabla_{[\mathbf{p}_k]_{m}}R_{u,li}
 &= \mathbf{p}_i^H\mathbf{h}_{l,i}^H \nabla_{[\mathbf{p}_k]_{m}}({r}_{u,li}^{-1}) \mathbf{h}_{l,i} \mathbf{p}_i {\varepsilon}_{u,li}.\label{grad_cov_E_eps_i}
 \end{align}Using $\nabla_{\mathbf{X}} (\mathbf{X}^{-1}) = -\mathbf{X}^{-1}\nabla (\mathbf{X}) \mathbf{X}^{-1}$ \cite{matrix_cookbook}, we can write
 \begin{align}
 \nabla_{[\mathbf{p}_k]_{m}}({r}_{u,li}^{-1})&= -{r}_{u,li}^{-1}\nabla_{[\mathbf{p}_k]_{m}}({r}_{u,li}){r}_{u,li}^{-1}.\label{grad_pkm_invrn2i}
 \end{align}Then we compute
 \begin{align}
 \nabla_{[\mathbf{p}_k]_{m}}({r}_{u,li})&= \mathbf{h}_{l,i} \mathbf{p}_k \bar{\alpha} \mathbf{e}_m^H \mathbf{h}_{l,i}^H.\label{grad_pkm_rn2i}
 \end{align}By combining (\ref{grad_cov_E_eps_i}), (\ref{grad_pkm_invrn2i}) and (\ref{grad_pkm_rn2i}) we have
 \begin{align}
  \nabla_{[\mathbf{p}_k]_{m}}R_{u,li}&= - \bar{\alpha} \mathbf{e}_m^H \mathbf{h}_{l,i}^H {r}_{u,li}^{-1} \mathbf{h}_{l,i} \mathbf{p}_i {\varepsilon}_{u,li}\mathbf{p}_i^H\mathbf{h}_{l,i}^H {r}_{u,li}^{-1} \mathbf{h}_{l,i} \mathbf{p}_k. \label{grad_cov_R_eps_i_mn}
 \end{align}Overall, we have
 \begin{align}
 \nabla_{\mathbf{p}_k}R_{u,li} &= - \bar{\alpha} \mathbf{h}_{l,i}^H {r}_{u,li}^{-1} \mathbf{h}_{l,i} \mathbf{p}_i {\varepsilon}_{u,li}\mathbf{p}_i^H \mathbf{h}_{l,i}^H {r}_{u,li}^{-1} \mathbf{h}_{l,i} \mathbf{p}_k . \label{grad_R_i_last}
 \end{align}Using (\ref{grad_R_k}) and (\ref{grad_R_i_last}), we conclude that the gradient of A can be written as
 \begin{align}
 \nabla_{\mathbf{p}_k} \mathrm{A} &= - \sum_{l=1}^{L} \mu_{l,k} a_k \mathbf{h}_{l,k}^H {r}_{u,lk}^{-1} \mathbf{h}_{l,k} \mathbf{p}_k {\varepsilon}_{u,lk} \nonumber \\
 &\:\quad + \left(\sum_{i=1,i\neq k}^{K} \sum_{l=1}^{L} \bar{\alpha} \mu_{l,k} a_k \mathbf{h}_{l,i}^H {r}_{u,li}^{-1} \mathbf{h}_{l,i} \mathbf{p}_i {\varepsilon}_{u,li}\mathbf{p}_i^H \mathbf{h}_{l,i}^H {r}_{u,li}^{-1} \mathbf{h}_{l,i}\right) \mathbf{p}_k\label{grad_B_last}.
 \end{align}
 \subsubsection{Gradient of B}Secondly we compute the gradient of B defined in (\ref{append_lagrangian_SR}). Note that \\*$\nabla_{\mathbf{p}_k} R_{c,lk} = (\nabla {\varepsilon}_{c,lk}^{-1}){\varepsilon}_{c,lk}$. The gradient $\nabla_{[\mathbf{p}_k]_{m}} {\varepsilon}_{c,lk}^{-1}$ is calculated by applying the chain rule on $\nabla_{[\mathbf{p}_k]_{m}} {\varepsilon}_{c,lk}^{-1}$:
 \begin{align}
 \nabla_{[\mathbf{p}_k]_{m}} {\varepsilon}_{c,lk}^{-1} &= \nabla_{[\mathbf{p}_k]_{m}} \big ({\frac{1}{\alpha}} +  \mathbf{p}_A^H\mathbf{h}_{l,k}^H r_{c,lk}^{-1}\mathbf{h}_{l,k} \mathbf{p}_A\big) \nonumber \\
 &= \mathbf{p}_A^H\mathbf{h}_{l,k}^H \frac{\partial ({r}_{c,lk}^{-1})}{[\partial\mathbf{p}_k^\ast]_{m}}\mathbf{h}_{l,k} \mathbf{p}_A +  \frac{\partial (\mathbf{p}_A^H\mathbf{h}_{l,k}^H)}{[\partial \mathbf{p}_k^\ast]_{m}}{r}_{c,lk}^{-1}\mathbf{h}_{l,k}\mathbf{p}_A + \mathbf{p}_A^H\mathbf{h}_{l,k}^H {r}_{c,lk}^{-1}\frac{\partial (\mathbf{h}_{l,k}\mathbf{p}_A)}{[\partial\mathbf{p}_k^\ast]_{m}} \nonumber  \\
 &= -\mathbf{p}_A^H\mathbf{h}_{l,k}^H {r}_{c,lk}^{-1} \frac{\partial ({r}_{c,lk})}{[\partial\mathbf{p}_k^\ast]_{m}} {r}_{c,lk}^{-1}\mathbf{h}_{l,k}\mathbf{p}_A + \mathbf{e}_m^H\mathbf{h}_{l,k}^H {r}_{c,lk}^{-1}\mathbf{h}_{l,k}\mathbf{p}_A + 0.
 \end{align}The last term is $0$ since $\frac{\partial\mathbf{p}_k}{\partial \mathbf{p}_k^\ast}  = 0$. Now we continue to compute
 \begin{align}
 \nabla_{[\mathbf{p}_k]_m} {r}_{c,lk} &= \frac{\partial \left(\mathbf{h}_{l,k}\left(\sum_{i=1}^{K}\mathbf{p}_i\bar{\alpha} \mathbf{p}_i^H\right)\mathbf{h}_{l,k}^H + 1 \right)}{[\partial \mathbf{p}_k^\ast]_{m}}\nonumber \\
 &= \frac{\partial \left(\mathbf{h}_{l,k} \mathbf{p}_k \bar{\alpha}\mathbf{p}_k^H \mathbf{h}_{l,k}^H + \mathbf{h}_{l,k}\left(\sum_{i=1,i\neq k}^{K}\mathbf{p}_i\bar{\alpha}\mathbf{p}_i^H\right)\mathbf{h}_{l,k}^H + 1 \right)}{[\partial\mathbf{p}_k^\ast]_{m}} \nonumber \\
 &=\mathbf{h}_{l,k}\mathbf{p}_k \bar{\alpha} \mathbf{e}_m^H \mathbf{h}_{l,k}^H,
 \end{align}as $\nabla_{\mathbf{X}}(\mathbf{X}\mathbf{A}\mathbf{X}^H)= \mathbf{X}\mathbf{A}$\cite{moon_stirling_math}. Then, we have
 \begin{align}
 \nabla_{[\mathbf{p}_k]_{m}}{\varepsilon}_{c,lk}^{-1} &= -\mathbf{p}_A^H\mathbf{h}_{l,k}^H {r}_{c,lk}^{-1}\mathbf{h}_{l,k}\mathbf{p}_k \bar{\alpha} \mathbf{e}_m^H \mathbf{h}_{l,k}^H {r}_{c,lk}^{-1}\mathbf{h}_{l,k} \mathbf{p}_A + \mathbf{e}_m^H\mathbf{h}_{l,k}^H {r}_{c,lk}^{-1}\mathbf{h}_{l,k}\mathbf{p}_A, \label{append_invEck}\\
 \nabla_{[\mathbf{p}_k]_{m}}R_{c,lk} &= - \mathbf{e}_m^H \mathbf{h}_{l,k}^H {r}_{c,lk}^{-1}\mathbf{h}_{l,k} \mathbf{p}_A {\varepsilon}_{c,lk} \mathbf{p}_A^H\mathbf{h}_{l,k}^H {r}_{c,lk}^{-1}\mathbf{h}_{l,k}\mathbf{p}_k \bar{\alpha} + \mathbf{e}_m^H\mathbf{h}_{l,k}^H {r}_{c,lk}^{-1}\mathbf{h}_{l,k}\mathbf{p}_A {\varepsilon}_{c,lk}.
  \label{grad_R_eps_cl_mn_part1}
 \end{align}Finally, we obtain
 \begin{align}
 \nabla_{\mathbf{p}_k}R_{c,lk} &= - \mathbf{h}_{l,k}^H {r}_{c,lk}^{-1}\mathbf{h}_{l,k} \mathbf{p}_A {\varepsilon}_{c,lk} \mathbf{p}_A^H\mathbf{h}_{l,k}^H {r}_{c,lk}^{-1}\mathbf{h}_{l,k}\mathbf{p}_k \bar{\alpha} + \mathbf{h}_{l,k}^H {r}_{c,lk}^{-1}\mathbf{h}_{l,k}\mathbf{p}_A {\varepsilon}_{c,lk}, \label{grad_R_eps_cl_last}
 \end{align}and the gradient of B becomes
 \begin{align}
 \nabla_{\mathbf{p}_k} \mathrm{B}
 &= \left (\sum_{i=1}^{K}\sum_{l=1}^{L} \bar{\alpha}\eta_{l,i} b \mathbf{h}_{l,i}^H {r}_{c,li}^{-1}\mathbf{h}_{l,i} \mathbf{p}_A {\varepsilon}_{c,li} \mathbf{p}_A^H\mathbf{h}_{l,i}^H {r}_{c,li}^{-1}\mathbf{h}_{l,i}\right)\mathbf{p}_k \nonumber \\
 &\:\quad - \sum_{i=1}^{K} \sum_{l=1}^{L}\eta_{l,i} b \mathbf{h}_{l,k}^H {r}_{c,lk}^{-1}\mathbf{h}_{l,k}\mathbf{p}_A {\varepsilon}_{c,lk}.
 \label{grad_A_last}
 \end{align}
 \subsubsection{Gradient of C}Thirdly, we compute the gradient of C defined in (\ref{append_lagrangian_SR}) as
 \begin{align}
 \nabla_{\mathbf{p}_k} \mathrm{C}
 &= \lambda\left(\alpha\mathbf{p}_A + \bar{\alpha}\mathbf{p}_k\right),
 \label{grad_C_last}
 \end{align}since $\nabla_{\mathbf{p}_k} \Tr (\mathbf{p}_A \mathbf{p}_A^H) = \mathbf{p}_A$. Finally combining (\ref{grad_B_last}), (\ref{grad_A_last}) and (\ref{grad_C_last}), we have
 \begin{align}
 \nabla_{\mathbf{p}_k} f(\mathbf{P},t_k,z) &= - \sum_{l=1}^{L} \mu_{l,k} a_k \mathbf{h}_{l,k}^H {r}_{u,lk}^{-1} \mathbf{h}_{l,k} \mathbf{p}_k {\varepsilon}_{u,lk} \nonumber \\
 &\:\quad + \left(\sum_{i=1,i\neq k}^{K} \sum_{l=1}^{L} \bar{\alpha} \mu_{l,i} a_i \mathbf{h}_{l,i}^H {r}_{u,li}^{-1} \mathbf{h}_{l,i} \mathbf{p}_i {\varepsilon}_{u,li}\mathbf{p}_i^H \mathbf{h}_{l,i}^H {r}_{u,li}^{-1} \mathbf{h}_{l,i}\right) \mathbf{p}_k \nonumber \\
 &\:\quad + \left (\sum_{i=1}^{K}\sum_{l=1}^{L} \bar{\alpha} \eta_{l,i} b \mathbf{h}_{l,i}^H {r}_{c,li}^{-1}\mathbf{h}_{l,i} \mathbf{p}_A {\varepsilon}_{c,li} \mathbf{p}_A^H\mathbf{h}_{l,i}^H {r}_{c,li}^{-1}\mathbf{h}_{l,i}\right)\mathbf{p}_k \nonumber \\
 &\:\quad - \sum_{i=1}^{K} \sum_{l=1}^{L}\eta_{l,i} b \mathbf{h}_{l,i}^H {r}_{c,li}^{-1}\mathbf{h}_{l,i}\mathbf{p}_A {\varepsilon}_{c,li} + \lambda\left(\alpha\mathbf{p}_A + \bar{\alpha}\mathbf{p}_k\right)
 \label{grad_f_append}.
 \end{align}
  \section{}\label{derive_Pk}
   In this appendix, we prove Theorem \ref{theorem_Pk}. In subsections A and B, we do the derivations respectively for signal models 1 and 2. For notational convenience, we drop the upper index $(m)$.
  \subsection{Derivations for Signal Model $1$ $(m=1)$}
  Taking the derivative of the objective function $h$ in (\ref{lagrangian_h}) with respect to ${W}_{l,k}$, then equating it to zero, we obtain the following equation
  \begin{align}
  \psi_{l,k} w \alpha \mathbf{p}_A^H \mathbf{h}_{l,k}^H  &= \sum_{i=1}^{K}\psi_{l,k} w \bar{\alpha} \mathbf{h}_{l,k} \mathbf{p}_i\mathbf{p}_i^H \mathbf{h}_{l,k}^H  {W}_{l,k}  +\psi_{l,k} w \alpha \mathbf{h}_{l,k} \mathbf{p}_A \mathbf{p}_A^H \mathbf{h}_{l,k}^H {W}_{l,k} + \psi_{l,k} {w} {W}_{l,k}. \label{W_power}
  \end{align}This leads to
  \begin{align}
  {W}_{l,k} &= \alpha\mathbf{p}_A^H \mathbf{h}_{l,k}^H \big(\sum_{i=1}^{K} \bar{\alpha} \mathbf{h}_{l,k} \mathbf{p}_i\mathbf{p}_i^H \mathbf{h}_{l,k}^H + \alpha \mathbf{h}_{l,k} \mathbf{p}_A \mathbf{p}_A^H \mathbf{h}_{l,k}^H + 1 \big)^{-1}\label{W_rec_append}.
  \end{align}Similarly, we take the derivative of (\ref{lagrangian_h}) with respect to ${V}_{l,k}$ and equating it to zero, we have
  \begin{align}
  \xi_{l,k} v_k \bar{\alpha} \mathbf{p}_k^H \mathbf{h}_{l,k}^H  &= \sum_{i=1}^{K} \xi_{l,k} {v}_k \mathbf{h}_{l,k} \mathbf{p}_i\bar{\alpha}\mathbf{p}_i^H \mathbf{h}_{l,k}^H {V}_{l,k} + \xi_{l,k} {v}_k {V}_{l,k}, \label{V_power_sic}
  \end{align}and this leads to
  \begin{align}
  {V}_{l,k} &= \bar{\alpha} \mathbf{p}_k^H \mathbf{h}_{l,k}^H\big(\sum_{i=1}^{K} \bar{\alpha} \mathbf{h}_{l,k} \mathbf{p}_i\mathbf{p}_i^H \mathbf{h}_{l,k}^H + 1 \big)^{-1}.\label{V_rec_sic_append}
  \end{align}Then, taking the gradient of (\ref{lagrangian_h}) with respect to ${\mathbf{p}_k}$, and equating it to zero, we have the following equation
  \begin{eqnarray}
  \lefteqn{\sum_{l=1}^{L}\xi_{l,k} v_k \bar{\alpha} \mathbf{h}_{l,k}^H {V}_{l,k}^{\ast}  + \sum_{i=1}^{K}\sum_{l=1}^{L}\psi_{l,i} w \alpha \mathbf{h}_{l,i}^H {W}_{l,i}^{\ast} }\nonumber \\
  &=& \sum_{i=1}^{K} \sum_{l=1}^{L} \xi_{l,i} v_i \bar{\alpha} \mathbf{h}_{l,i}^H {V}_{l,i}^{\ast} {V}_{l,i} \mathbf{h}_{l,i} \mathbf{p}_k + \sum_{i=1}^{K} \sum_{l=1}^{L} \psi_{l,i} w \bar{\alpha} \mathbf{h}_{l,i}^H {W}_{l,i}^{\ast} {W}_{l,i} \mathbf{h}_{l,i} \mathbf{p}_k  \nonumber \\
  &&+ \sum_{i=1}^{K} \sum_{l=1}^{L} \psi_{l,i} w \alpha \mathbf{h}_{l,i}^H {W}_{l,i}^{\ast} {W}_{l,i} \mathbf{h}_{l,i} \mathbf{p}_A + \beta\big( \alpha\mathbf{p}_A + \bar{\alpha}\mathbf{p}_k\big).\label{Pk_power}
  \end{eqnarray}Then,
  \begin{align}
  \mathbf{p}_k &= \big(\beta \mathbf{I} + \sum_{i=1}^{K} \sum_{l=1}^{L} \xi_{l,i} v_i \bar{\alpha} \mathbf{h}_{l,i}^H {V}_{l,i}^{\ast} {V}_{l,i} \mathbf{h}_{l,i} + \sum_{i=1}^{K} \sum_{l=1}^{L} \psi_{l,i} w \mathbf{h}_{l,i}^H {W}_{l,i}^{\ast} {W}_{l,i} \mathbf{h}_{l,i} \big)^{-1} \nonumber\\
  &\quad \times \big[\sum_{l=1}^{L}\xi_{l,k} v_k \bar{\alpha} \mathbf{h}_{l,k}^H {V}_{l,k}^{\ast}  + \sum_{i=1}^{K}\sum_{l=1}^{L}\psi_{l,i} w \alpha \mathbf{h}_{l,i}^H {W}_{l,i}^{\ast} - \beta\alpha(\mathbf{p}_A-\mathbf{p}_k) \nonumber\\
  &\quad - \sum_{i=1}^{K} \sum_{l=1}^{L} \psi_{l,i} w \alpha \mathbf{h}_{l,i}^H {W}_{l,i}^{\ast} {W}_{l,i} \mathbf{h}_{l,i} (\mathbf{p}_A-\mathbf{p}_k) \big].\label{Pk_append}
  \end{align}
  To calculate $\beta$, we post-multiply both sides of (\ref{W_power}) by ${W}_{l,k}^{\ast}$ and (\ref{V_power_sic}) by ${V}_{l,k}^{\ast}$, and perform $\sum_{k=1}^{K} \sum_{l=1}^{L}$ on (\ref{W_power}) and (\ref{V_power_sic}) on both sides. Then we obtain
  \begin{align}
  \sum_{k=1}^{K} \sum_{l=1}^{L}\psi_{l,k} w \alpha \mathbf{p}_A^H \mathbf{h}_{l,k}^H{W}_{l,k}^{\ast}  &= \sum_{k=1}^{K} \sum_{l=1}^{L} \sum_{i=1}^{K}\psi_{l,k} w \bar{\alpha} \mathbf{h}_{l,k} \mathbf{p}_i\mathbf{p}_i^H \mathbf{h}_{l,k}^H  {W}_{l,k}{W}_{l,k}^{\ast}  \nonumber \\
  &\quad+ \sum_{k=1}^{K} \sum_{l=1}^{L}\psi_{l,k} w \alpha \mathbf{h}_{l,k} \mathbf{p}_A \mathbf{p}_A^H \mathbf{h}_{l,k}^H {W}_{l,k} {W}_{l,k}^{\ast} + \sum_{k=1}^{K} \sum_{l=1}^{L}\psi_{l,k} {w} {W}_{l,k} {W}_{l,k}^{\ast}\label{W_sum_pow}\\
  \sum_{k=1}^{K} \sum_{l=1}^{L} \xi_{l,k} v_k \bar{\alpha} \mathbf{p}_k^H \mathbf{h}_{l,k}^H {V}_{l,k}^{\ast} &= \sum_{k=1}^{K} \sum_{l=1}^{L}\sum_{i=1}^{K} \xi_{l,k} {v}_k \mathbf{h}_{l,k} \mathbf{p}_i\bar{\alpha}\mathbf{p}_i^H \mathbf{h}_{l,k}^H {V}_{l,k} {V}_{l,k}^{\ast} + \sum_{k=1}^{K} \sum_{l=1}^{L}\xi_{l,k} {v}_k {V}_{l,k} {V}_{l,k}^{\ast}. \label{V_sum_pow}
  \end{align}The summation of (\ref{W_sum_pow}) and (\ref{V_sum_pow}) leads to
  \begin{eqnarray}
  \lefteqn{\sum_{k=1}^{K} \sum_{l=1}^{L}\psi_{l,k} w \alpha \mathbf{p}_A^H \mathbf{h}_{l,k}^H{W}_{l,k}^{\ast} + \sum_{k=1}^{K} \sum_{l=1}^{L} \xi_{l,k} v_k \bar{\alpha} \mathbf{p}_k^H \mathbf{h}_{l,k}^H {V}_{l,k}^{\ast}}\nonumber \\
  &=&\sum_{k=1}^{K} \sum_{l=1}^{L} \sum_{i=1}^{K}\psi_{l,k} w \bar{\alpha} \mathbf{h}_{l,k} \mathbf{p}_i\mathbf{p}_i^H \mathbf{h}_{l,k}^H  {W}_{l,k}{W}_{l,k}^{\ast} + \sum_{k=1}^{K} \sum_{l=1}^{L}\psi_{l,k} w \alpha \mathbf{h}_{l,k} \mathbf{p}_A \mathbf{p}_A^H \mathbf{h}_{l,k}^H {W}_{l,k} {W}_{l,k}^{\ast} \nonumber \\
  && + \sum_{k=1}^{K} \sum_{l=1}^{L}\psi_{l,k} {w} {W}_{l,k} {W}_{l,k}^{\ast} + \sum_{k=1}^{K} \sum_{l=1}^{L}\sum_{i=1}^{K} \xi_{l,k} {v}_k \mathbf{h}_{l,k} \mathbf{p}_i\bar{\alpha}\mathbf{p}_i^H \mathbf{h}_{l,k}^H {V}_{l,k} {V}_{l,k}^{\ast} \nonumber \\
  && + \sum_{k=1}^{K} \sum_{l=1}^{L}\xi_{l,k} {v}_k {V}_{l,k} {V}_{l,k}^{\ast}.\label{VW_last_power}
  \end{eqnarray}On the other hand, pre-multiplying (\ref{Pk_power}) with $\mathbf{p}_k^H$ and summing over $k$ from $1$ to $K$, we have
  \begin{eqnarray}
  \lefteqn{\sum_{k=1}^{K}\sum_{l=1}^{L}\xi_{l,k} v_k \bar{\alpha} \mathbf{p}_k^H\mathbf{h}_{l,k}^H {V}_{l,k}^{\ast}  + \sum_{k=1}^{K} \sum_{i=1}^{K}\sum_{l=1}^{L}\psi_{l,i} w \alpha \mathbf{p}_k^H \mathbf{h}_{l,i}^H {W}_{l,i}^{\ast} }\nonumber \\
  &=& \sum_{k=1}^{K} \sum_{i=1}^{K} \sum_{l=1}^{L} \xi_{l,i} v_i \bar{\alpha} \mathbf{p}_k^H \mathbf{h}_{l,i}^H {V}_{l,i}^{\ast} {V}_{l,i} \mathbf{h}_{l,i} \mathbf{p}_k + \sum_{k=1}^{K} \sum_{i=1}^{K} \sum_{l=1}^{L} \psi_{l,i} w \bar{\alpha} \mathbf{p}_k^H \mathbf{h}_{l,i}^H {W}_{l,i}^{\ast} {W}_{l,i} \mathbf{h}_{l,i} \mathbf{p}_k  \nonumber \\
  &&+ \sum_{k=1}^{K} \sum_{i=1}^{K} \sum_{l=1}^{L} \psi_{l,i} w \alpha \mathbf{p}_k^H \mathbf{h}_{l,i}^H {W}_{l,i}^{\ast} {W}_{l,i} \mathbf{h}_{l,i} \mathbf{p}_A + \sum_{k=1}^{K}\mathbf{p}_k^H\beta\big( \alpha\mathbf{p}_A + \bar{\alpha} \mathbf{p}_k\big).\label{Pk_last_power}
  \end{eqnarray}Comparing the left sides of (\ref{VW_last_power}) and (\ref{Pk_last_power}), we observe that they are equal. Then, the right sides are also equal to each other. As we assume that the power constraint in (\ref{pow_const}) is satisfied with equality, (\ref{VW_last_power}) = (\ref{Pk_last_power}) leads to
  \begin{align}
  \beta &= \frac{1}{E_{tx}}\left[\sum_{k=1}^{K} \sum_{l=1}^{L}\xi_{l,k} {v}_k {V}_{l,k} {V}_{l,k}^{\ast} + \sum_{k=1}^{K} \sum_{l=1}^{L}\psi_{l,k} {w} {W}_{l,k} {W}_{l,k}^{\ast} \right]. \label{lamda_append}
  \end{align}
  \subsection{Derivations for Signal Model $2$ $(m=2)$}
First, we would like to remind that in the second signal model, $\mathbf{p}_A=\mathbf{p}_c$. Taking the derivative of the objective function $h$ in (\ref{lagrangian_h}) with respect to ${W}_{l,k}$, then equating it to zero, we obtain the following equation 
  \begin{align}
  \psi_{l,k} w \mathbf{p}_c^H \mathbf{h}_{l,k}^H  &= \sum_{i=1}^{K}\psi_{l,k} w \mathbf{h}_{l,k} \mathbf{p}_i\mathbf{p}_i^H \mathbf{h}_{l,k}^H  {W}_{l,k}  +\psi_{l,k} w \mathbf{h}_{l,k} \mathbf{p}_c \mathbf{p}_c^H \mathbf{h}_{l,k}^H {W}_{l,k} + \psi_{l,k} {w} {W}_{l,k}. \label{W_power2}
  \end{align}This leads to
  \begin{align}
  {W}_{l,k} &= \mathbf{p}_c^H \mathbf{h}_{l,k}^H \big(\sum_{i=1}^{K}  \mathbf{h}_{l,k} \mathbf{p}_i\mathbf{p}_i^H \mathbf{h}_{l,k}^H +\mathbf{h}_{l,k} \mathbf{p}_c \mathbf{p}_c^H \mathbf{h}_{l,k}^H + 1 \big)^{-1}\label{W_rec_append2}.
  \end{align}Similarly, we take the derivative of (\ref{lagrangian_h}) with respect to ${V}_{l,k}$ and equating it to zero, we have
  \begin{align}
  \xi_{l,k} v_k \mathbf{p}_k^H \mathbf{h}_{l,k}^H  &= \sum_{i=1}^{K} \xi_{l,k} {v}_k \mathbf{h}_{l,k} \mathbf{p}_i\mathbf{p}_i^H \mathbf{h}_{l,k}^H {V}_{l,k} + \xi_{l,k} {v}_k {V}_{l,k}, \label{V_power_sic2}
  \end{align}and this leads to
  \begin{align}
  {V}_{l,k} &= \mathbf{p}_k^H \mathbf{h}_{l,k}^H\big(\sum_{i=1}^{K}  \mathbf{h}_{l,k} \mathbf{p}_i\mathbf{p}_i^H \mathbf{h}_{l,k}^H + 1 \big)^{-1}.\label{V_rec_sic_append2}
  \end{align}Taking the gradient of (\ref{lagrangian_h}) with respect to ${\mathbf{p}_k}$ and ${\mathbf{p}_c}$, and equating it to zero, we have the following equation
  \begin{eqnarray}
  \sum_{l=1}^{L}\xi_{l,k} v_k  \mathbf{h}_{l,k}^H {V}_{l,k}^{\ast} =\sum_{i=1}^{K} \sum_{l=1}^{L} \xi_{l,i} v_i  \mathbf{h}_{l,i}^H {V}_{l,i}^{\ast} {V}_{l,i} \mathbf{h}_{l,i} \mathbf{p}_k + \sum_{i=1}^{K} \sum_{l=1}^{L} \psi_{l,i} w  \mathbf{h}_{l,i}^H {W}_{l,i}^{\ast} {W}_{l,i} \mathbf{h}_{l,i} \mathbf{p}_k  + \beta \mathbf{p}_k.\label{Pk_power2}
 \end{eqnarray}
 \begin{eqnarray}
  \sum_{k=1}^{K} \sum_{l=1}^{L} \psi_{l,k} w  \mathbf{h}_{l,k}^H {W}_{l,k}^{\ast} =  \sum_{k=1}^{K} \sum_{l=1}^{L} \psi_{l,k} w  \mathbf{h}_{l,k}^H {W}_{l,k}^{\ast} {W}_{l,k} \mathbf{h}_{l,k} \mathbf{p}_c  + \beta \mathbf{p}_c.\label{Pc_power2}
  \end{eqnarray}
  Then,
  \begin{align}
  \mathbf{p}_k &= \left(\beta \mathbf{I} + \sum_{i=1}^{K} \sum_{l=1}^{L} \xi_{l,i} v_i \mathbf{h}_{l,i}^H {V}_{l,i}^{{\ast}} {V}_{l,i} \mathbf{h}_{l,i} + \sum_{i=1}^{K} \sum_{l=1}^{L} \psi_{l,i} w \mathbf{h}_{l,i}^H {W}_{l,i}^{{\ast}} {W}_{l,i} \mathbf{h}_{l,i} \right)^{-1} 
 \left(\sum_{l=1}^{L}\xi_{l,k} v_k \mathbf{h}_{l,k}^H {V}_{l,k}^{{\ast}} \right),\label{Pk2_append}\\
\mathbf{p}_c &= \left(\beta \mathbf{I} + \sum_{i=1}^{K} \sum_{l=1}^{L} \psi_{l,i} w \mathbf{h}_{l,i}^H {W}_{l,i}^{{\ast}} {W}_{l,i} \mathbf{h}_{l,i} \right)^{-1} \left( \sum_{i=1}^{K}\sum_{l=1}^{L}\psi_{l,i} w \mathbf{h}_{l,i}^H {W}_{l,i}^{{\ast}} \right).\label{Pc_append}
  \end{align}
  To calculate $\beta$ for signal model $2$, we followed the similar steps in the previous subsection and obtained the following expression,
  \begin{align}
  \beta &= \frac{1}{E_{tx}}\left[\sum_{k=1}^{K} \sum_{l=1}^{L}\xi_{l,k} {v}_k {V}_{l,k} {V}_{l,k}^{\ast} + \sum_{k=1}^{K} \sum_{l=1}^{L}\psi_{l,k} {w} {W}_{l,k} {W}_{l,k}^{\ast} \right]. \label{lamda_append2}
  \end{align} 
 
 \end{appendices}

\section*{Acknowledgment}
The authors would like to thank the anonymous reviewers for their valuable comments that improved the results.

\ifCLASSOPTIONcaptionsoff
  \newpage
\fi



%

\bibliographystyle{IEEEtran} 
\bibliography{referencesMult}
\end{document}